\def\Y{\mathbf{Y}}
\def\M{\mathbf{M}}
\def\A{\mathbf{A}}
\def\dM{\mathbf{dM}}
\def\I{\mathbf{I}}
\def\Z{\mathbf{Z}}
\def\shuffle#1{\mathbf{S}_{#1}}
\def\Mtilde{\tilde{\mathbf{M}}}
\def\Hf{\mathbf{H}_{(\A,\dM)}f}
\def\Mt{\widetilde{\M}}
\def\gt{\hat{g}_t}
\def\Ne{N_{\text{epochs}}}
\def\Nd{N_{\text{iter}}^D}
\def\Np{N_{\text{iter}}^P}
\def\Ni{N_{\text{iter}}}
\newcommand{\wrt}{with respect to\xspace}
\def\Ball#1#2{\mathcal{B}_{\text{F}}(#1,#2)}
\def\Bs{\mathcal{B}_{\text{F}}(\mathbf{0},\sigma)}
\def\Sk{\mathcal{S}_{\nendm}}
\def\Ak{\mathcal{A}_{\nendm}}
\def\Bk#1{\mathcal{B}_{\text{F}}(#1,\kappa)}
\def\Qt{\mathcal{Q}(\Y_t,\M)}
\def\D{\mathcal{D}}
\newcommand{\Bf}[2]{\mathcal{B}_{\text{F}}(#1,#2)}
\newcommand{\R}[2]{\mathbb{R}^{#1 \times #2}}
\newcommand{\Zero}[2]{\mathbf{0}_{#1,#2}}
\def\1#1{\mathbf{1}_{#1}}
\def\normf#1{\left\lVert #1 \right\rVert_{\text{F}}}%
\def\normF2#1{\left\lVert #1 \right\rVert_{\text{F}}^2}%
\def\norm2#1{\lVert #1 \rVert_{2}^2}%
\def\argmin#1{\underset{#1}{\arg \min \,}}
\def\t#1{#1^{\text{T}}}
\DeclareMathOperator{\aSAM}{aSAM}
\DeclareMathOperator{\GMSE}{GMSE}
\DeclareMathOperator{\RE}{RE}
\DeclareMathOperator{\Tr}{Tr}
\newcommand\nbpix{N}
\newcommand\nendm{R}
\newcommand\nband{L}
\newcommand\ntime{T}
\newcommand{\second}[1]{\textcolor{blue}{#1}}
\newtheorem{lemma}{Lemma}
\newtheorem{proposition}{Proposition}
\theoremstyle{definition}
\newtheorem*{remark}{Remark}
\newtheorem{assumption}{Assumption}
\newcolumntype{L}[1]{>{\raggedright\let\newline\\\arraybackslash\hspace{0pt}}m{#1}}
\newcolumntype{C}[1]{>{\centering\let\newline\\\arraybackslash\hspace{0pt}}m{#1}}
\newcolumntype{R}[1]{>{\raggedleft\let\newline\\\arraybackslash\hspace{0pt}}m{#1}}
\title{Online Unmixing of Multitemporal Hyperspectral Images accounting for Spectral Variability}
\author{Pierre-Antoine Thouvenin,~\IEEEmembership{Student Member,~IEEE}, Nicolas Dobigeon,~\IEEEmembership{Senior Member,~IEEE} and Jean-Yves~Tourneret,~\IEEEmembership{Senior Member,~IEEE}

\thanks{This work is supported by the Direction G\'en\'erale de l'Armement, French Ministry of Defence.}
\thanks{The authors are with the University of Toulouse, IRIT/INP-ENSEEIHT, 2 rue Camichel, BP 7122, 31071 Toulouse cedex 7, France. (e-mail: \{pierreantoine.thouvenin, Nicolas.Dobigeon, Jean-Yves.Tourneret\}@enseeiht.fr}}
\begin{document}
\maketitle
\begin{abstract}
Hyperspectral unmixing is aimed at identifying the reference spectral signatures composing an hyperspectral image and their relative abundance fractions in each pixel. In practice, the identified signatures may vary spectrally from an image to another due to varying acquisition conditions, thus inducing possibly significant estimation errors. Against this background, hyperspectral unmixing of several images acquired over the same area is of considerable interest. Indeed, such an analysis enables the endmembers of the scene to be tracked and the corresponding endmember variability to be characterized. Sequential endmember estimation from a set of hyperspectral images is expected to provide improved performance when compared to methods analyzing the images independently. However, the significant size of hyperspectral data precludes the use of batch procedures to jointly estimate the mixture parameters of a sequence of hyperspectral images. Provided that each elementary component is present in at least one image of the sequence, we propose to perform an online hyperspectral unmixing accounting for temporal endmember variability. The online hyperspectral unmixing is formulated as a two-stage stochastic program, which can be solved using a stochastic approximation. The performance of the proposed method is evaluated on synthetic and real data. A comparison with independent unmixing algorithms finally illustrates the interest of the proposed strategy.
\end{abstract}
\begin{IEEEkeywords}
Hyperspectral imagery, perturbed linear unmixing (PLMM), endmember temporal variability, two-stage stochastic program, stochastic approximation (SA). 
\end{IEEEkeywords}

\section{Introduction} \label{sec:intro}
\IEEEPARstart{H}{yperspectral} imagery has known an increasing interest over the past decades due to the significant spectral information it conveys. Acquired in hundreds of contiguous spectral bands (e.g., from \SIrange{300}{2600}{\nano \metre} for the AVIRIS sensor), hyperspectral (HS) images facilitate the identification of the elements composing the imaged scene\footnote{Note that some airborne sensors cover larger wavelength range, while some mounted sensors can offer millimeter spatial resolutions.}. However, the high spectral resolution of these images is mitigated by their lower spatial resolution, which results in pixel spectra composed of mixtures of reference signatures. Spectral unmixing consists of determining the reference spectral signatures composing the data -- referred to as \emph{endmembers} -- and their abundance fractions in each pixel according to a predefined mixture model accounting for several environmental factors (declivity, multiple reflections, ...). Provided microscopic interactions between the materials of the imaged scene are negligible and the relief of the scene is flat, a linear mixing model (LMM) is traditionally used to describe the data \cite{Bioucas2012jstars}. 
However, varying acquisition conditions such as illumination or natural evolution of the scene may significantly alter the shape and the amplitude of the spectral signatures acquired, thus affecting the extracted endmembers from an image to another. In this context, HS unmixing of several images acquired over the same area at different time instants can be of considerable interest. Indeed, such an analysis enables the endmembers of the scene and endmember variability to be assessed, thus improving endmember estimation when compared to independent image analyses performed with any state-of-the-art unmixing method.

So far, spatial variability within a given image has been considered in various models either derived from a statistical or a deterministic point of view \cite{Zare2014IEEESPMAG}. The first class of methods assumes that the endmember spectra are realizations of multivariate distributions \cite{Eches2010ip,Du2014,Halimi2014}. The second class of methods represents endmember signatures as members of spectral libraries associated with each material (bundles) \cite{Somers2012jstars2}. Another recently proposed approach consists in estimating the parameters of an explicit variability model \cite{Thouvenin2015}. To the best of our knowledge, spatio-temporal variability has been analyzed for the first time in the Bayesian framework proposed in \cite{Halimi2015eusipco}. Another recent contribution similarly resorts to a batch estimation technique to address spectral unmixing of multi-temporal HS images \cite{Henrot2015}. However, HS unmixing using a significant number of images or several large images precludes the use of batch estimation procedures as in \cite{Halimi2015eusipco,Henrot2015} due to limited memory and computational resources. Since online estimation procedures enable data to be sequentially incorporated into the estimation process without the need to simultaneously load all the data into memory, we focus in this paper on the design of an online HS unmixing method accounting for temporal variability.

Since the identified endmembers can be considered as time-varying instances of reference endmembers, we use the perturbed linear mixing model (PLMM) proposed in \cite{Thouvenin2015} to account for spectral variability. However, inspired by the works presented in \cite{Ralph2011,Mairal2010}, we formulate the unmixing problem as a two-stage stochastic program that allows the model parameters to be estimated online contrary to the algorithm proposed in \cite{Thouvenin2015}. To the best of our knowledge, it is the first time HS unmixing accounting for temporal variability has been formulated as a two-stage stochastic program solved by an online\footnote{The terminology ``online'' is slightly abusive in our context since the time difference between two consecutive images can extend to several months.} algorithm.

The paper is organized as follows. The proposed PLMM accounting for temporal variability is introduced in Section \ref{sec:PLMM}. Section \ref{sec:TSSP} describes an online algorithm to solve the resulting optimization problem. Experimental results obtained on synthetic and real data are reported in Sections \ref{sec:simulations} and \ref{sec:experiments} respectively. The results obtained with the proposed algorithm are systematically compared to those obtained with the vertex component analysis / fully constrained least squares (VCA \cite{Nascimento2005} / FCLS \cite{Heinz2001,Bioucas2010}), SISAL \cite{Bioucas2009} / FCLS, the $\ell_{1/2}$ non-negative matrix factorization (NMF) \cite{Qian2011} and the BCD/ADMM algorithm of \cite{Thouvenin2015}, each method being independently applied to each image of the sequence. Section \ref{sec:conclusion} finally concludes this work.
%

\begin{table}[t!]
\centering
\caption{Notations.}
\begin{tabular}{ll}
\hline
$\nbpix$  & number of pixels \\
$\nband $ & number of spectral bands \\
$\nendm $ & number of endmembers \\
$\ntime $ & number of images \\
$\mathbf{y}_{nt} \in \mathbb{R}^\nband$ & $n$th pixel of the $t$th image \\
$\Y_t  \in \R{\nband}{\nbpix}$ & lexicographically ordered pixels of the $t$th image\\
$\M    \in \R{\nband}{\nendm }$ & endmember matrix\\
$\dM_t \in \R{\nband}{\nendm }$ & $t$th variability matrix\\
$\A_t    \in \R{\nendm}{\nbpix}$& $t$th abundance matrix\\
$\mathbf{a}_{nt} \in \mathbb{R}^\nendm$ & $n$th column of the matrix $\A_t$\\
$\succeq$ & component-wise inequality \\
$\mathcal{Y}$ & $[0,1]^{\nband \times \nbpix}$ \\
$\mathcal{M}$ & $[0,1]^{\nband \times \nendm}$ \\
$\Sk$ & unit simplex of $\mathbf{R}^\nendm$ \\
$\Ak$ & $ \left\{ \A \in \R{\nendm}{\nbpix} \middle| \mathbf{a}_n \in \Sk, \, \forall n \in \llbracket 1,\nbpix \rrbracket \right\}$ \smallskip\\
$\Bk{\Z}$ & $\left\{ \mathbf{X} \in \R{\nband}{\nendm} \middle| \normf{\mathbf{X} - \Z} \leq \kappa \right\}$, $\kappa > 0$ \smallskip \\
$\mathcal{D}$ & $\Bs \cap \left\{\dM \middle| \normf{\mathbb{E}[\dM]} \leq \kappa \right\}$ \smallskip \\
$\mathcal{D}_t$ & $\Bs \cap \left\{\dM \middle| \normf{ \sum_{i=1}^{t-1} \dM_i + \dM} \leq t\kappa \right\} $ \\
$\mathcal{Z}_t$ & $\Ak \times \mathcal{D}_t$ \\
$\Qt$ &$\{(\A,\dM) \in \mathcal{Z}_t | \nabla_{(\A,\dM)} f(\Y_t,\M,\A,\dM) = \mathbf{0} \}$ \\
$\mathcal{P}_{\mathcal{S}}$ & projector on the set $\mathcal{S}$ \\
$\mathcal{P}_+$ & projector on $\left\{ \mathbf{X} \in \R{\nband}{\nendm} \middle| \mathbf{X} \succeq \Zero{\nband}{\nendm} \right\}$ \\
$\langle \mathbf{X}, \mathbf{Y} \rangle$ & matrix inner product $\Tr(\t{\mathbf{X}} \mathbf{Y})$ \\
$\iota_\mathcal{S} (x)$ & $\left\{ \begin{array}{ll}
0 &\text{ if } x \in \mathcal{S} \\
+\infty &\text{ otherwise.}
\end{array} \right.$ \\
\hline \vspace{-0.5cm}
\end{tabular}
\end{table}

\section{Problem statement} \label{sec:PLMM}

	\subsection{Perturbed linear mixing model (PLMM)} \label{p:PLMM}
We consider HS images acquired at $\ntime$ different time instants over the same scene, assuming that at most $\nendm$ endmembers are present in the resulting time series and that the images share these $\nendm$ common endmembers. Each endmember does not need to be present in each image, but at least in one image of the time series. Given an \emph{a priori} known number of endmembers $\nendm$, the PLMM consists in representing each pixel $\mathbf{y}_{nt}$ by a linear combination of the $\nendm$ endmembers -- denoted by $\mathbf{m}_r$ -- affected by a perturbation vector $\mathbf{dm}_{rt}$ accounting for temporal endmember variability. The proposed model considers the case where the variability essentially results from the evolution of the scene or from the global acquisition conditions from one image to another. As a first approximation, the variability is assumed to be constant on each image. The resulting PLMM can thus be written \vspace{-0.2cm}
\begin{equation}
\label{eq:model}
\mathbf{y}_{nt}  =  \sum_{r=1}^\nendm a_{rnt}\Bigl(\mathbf{m}_r + \mathbf{dm}_{rt} \Bigr) + \mathbf{b}_{nt} \vspace{-0.2cm}
\end{equation}%
for $n = 1,\dotsc , \nbpix$ and $t = 1, \dotsc,\ntime$, where $\mathbf{y}_{nt}$ denotes the $n$th image pixel at time $t$, $\mathbf{m}_r$ is the $r$th endmember, $a_{rnt}$ is the proportion of the $r$th endmember in the $n$th pixel at time $t$, and $\mathbf{dm}_{rt}$ denotes the perturbation of the $r$th endmember at time $t$. Finally, $\mathbf{b}_{nt}$ models the noise resulting from the data acquisition and the modeling errors. In matrix form, the PLMM \eqref{eq:model} can be written as \vspace{-0.2cm}
\begin{equation}
\Y_t  = (\M+\dM_t)\A_t + \mathbf{B}_t \vspace{-0.2cm}
\end{equation}%
where $\Y_t = \left[ \mathbf{y}_{1t},\dotsc,\mathbf{y}_{Nt} \right]$ is an $\nband \times \nbpix$ matrix containing the pixels of the $t$th image, $\M$ denotes an $\nband \times \nendm$ matrix containing the endmembers, $\A_t$ is an $\nendm \times \nbpix$ matrix composed of the abundance vectors $\mathbf{a}_{nt}$, $\dM_t$ is an $\nband \times \nendm$ matrix whose columns are the perturbation vectors associated with the $t$th image, and $\mathbf{B}_t$ is an $\nband \times \nbpix$ matrix accounting for the noise at time instant $t$. The non-negativity and sum-to-one constraints usually considered to reflect physical considerations are \vspace{-0.2cm}
\begin{align} \label{eq:constraints}
\begin{split}
\A_t & \succeq \mathbf{0}_{\nendm ,\nbpix}, \quad  \A_t^T \mathbf{1}_\nendm  = \mathbf{1}_\nbpix, \, \forall t = 1,\dotsc , \ntime   \\%
\M & \succeq \mathbf{0}_{\nband ,\nendm }
\end{split} \vspace{-0.2cm}
\end{align}{}%
where $\succeq$ denotes a component-wise inequality. We also consider the following assumptions on the inherent variability of the observed scenes \vspace{-0.2cm}
\begin{align} 
\normF2{\dM_t} &\leq \sigma^2, \quad \text{for} \ t = 1,\dotsc,\ntime \label{eq:bound_dM} \\
\normF2{\frac{1}{\ntime} \sum_{t=1}^\ntime \dM_t} & \leq \kappa^2 \label{eq:bound_mean_dM} \vspace{-0.2cm}
\end{align}
where $\sigma$ and $\kappa$ are fixed positive constants, and $\normf{\cdot}$ denotes the Frobenius norm.
 These two constraints can be interpreted in terms of the feasible domain of $\M$ and $\dM_t$.
Indeed, introducing the perturbed endmembers $\M_t \triangleq \M + \dM_t$, the constraint \eqref{eq:bound_dM} can be reformulated as \vspace{-0.2cm}
\begin{align*}
\normF2{\dM_t} = \normF2{\M - \M_t} \leq \sigma^2 \Leftrightarrow \M \in \bigcap_{t=1}^T \Bf{\M_t}{\sigma} \vspace{-0.2cm}
\end{align*}
where $\Bf{\M_t}{\sigma}$ is the ball of center $\M_t$ and of radius $\sigma$. This highlights the fact that the number of constraints imposed on the endmembers increases with $\ntime$, i.e., the more images are processed, the more information can be extracted in terms of endmember signatures. On the other hand, \eqref{eq:bound_mean_dM} constrains the perturbed endmembers to be distributed around the true endmembers, i.e., the endmember signatures $\M$ should reflect the average behavior of the perturbed endmembers $\M_t$ in the sequence. In practice, setting $\sigma$ to a reasonable value is desirable from a modeling point of view, since very large perturbations should probably be interpreted as outliers, thus leading to the removal of the corrupted elements from the unmixing process. Note however that the algorithm proposed in Section \ref{p:parameter_estimation} is independent from any consideration on the values of $\sigma^2$ and $\kappa^2$.

\begin{remark}
In practice, HS unmixing is performed on reflectance data, hence $\Y_t \in [0,1]^{\nband \times \nbpix}$. The abundance sum-to-one and non-negativity constraints further imply $\M~\in~[0,1]^{\nband \times \nendm}$. In fact, the compactness of both the data support and the space associated with the endmember constraints -- denoted by $\mathcal{Y}$ and $\mathcal{M}$ respectively -- is crucial for the convergence result given in Paragraph \ref{p:cv_proof}. In addition, the images $\Y_t$ can be assumed to be independent and identically distributed (i.i.d.) since these images have been acquired by possibly different sensors at different time instants.
\end{remark}

\subsection{Problem formulation} \label{subsec:problem_formulation}

In order to design an online estimation algorithm, the model \eqref{eq:model} combined with the constraints \eqref{eq:constraints} can be used to formulate a two-stage stochastic program consisting in estimating the endmembers present in the image sequence. Since only the endmembers are supposed to be commonly shared by the different images, we propose to minimize a marginal cost function obtained by marginalizing an instantaneous cost function over the abundances and the variability terms, so that the resulting cost only depends on the endmembers. Assuming the expectations are well-defined, we consider the following optimization problem \vspace{-0.2cm}
\begin{equation}
\label{eq:problem}
\min_{\M \in \mathcal{M}} g(\M) = \mathbb{E}_{\Y,\A,\dM} \Bigl[ f \bigl(\Y,\M,\A,\dM \bigr) \Bigr] \vspace{-0.2cm}
\end{equation}
where $\mathcal{M} = [0,1]^{\nband \times \nendm}$ and where the function $f$ is defined as \vspace{-0.2cm}
\begin{equation}
\label{eq:f}
\begin{split}
f(\Y,\M,\A,\dM) = &\frac{1}{2} \normF2{\Y - (\M+\dM)\A} \\
&+ \alpha \Phi(\A) + \beta \Psi(\M) + \gamma \Upsilon(\dM)
\end{split}. \vspace{-0.2cm}
\end{equation}
$\Phi,\Psi$ and $\Upsilon$ denote appropriate penalization terms on the abundances, the endmembers and the variability with \vspace{-0.2cm}
\begin{align}
\A \in \Ak &= \left\{ \A \in \R{\nendm}{\nbpix} \middle| \, \mathbf{a}_n \in \Sk, \, \text{for} \: n = 1,\dotsc,\nbpix \right\} \\
\dM \in \mathcal{D} &= \Bs \cap \left\{\dM \; \middle| \; \normf{\mathbb{E} \bigl[ \dM \bigr]} \leq \kappa \right\}. \vspace{-0.2cm}
\end{align}
 The parameters $\alpha,\beta$ and $\gamma$ ensure a trade-off between the data fitting term and the penalties. In practice, $g$ is approximated at time $t$ by an upper bound $\gt$ given by a stochastic approximation \cite{Mairal2010} \vspace{-0.2cm}
\begin{equation}
\label{eq:g_hat}
\begin{split}
&\gt (\M) = \frac{1}{2t} \sum_{i=1}^t \normF2{ \Y_i - (\M + \dM_i)\A_i} + \beta \Psi(\M) \\
&= \frac{1}{t} \sum_{i=1}^t \left( \frac{1}{2} \normF2{\M\A_i} - \langle \Y_i - \dM_i\A_i, \M\A_i \rangle \right) \\
& + \beta \Psi(\M) + c \\
& = \frac{1}{t} \left[ \frac{1}{2} \Tr( \t{\M} \M \mathbf{C}_t) + \Tr(\t{\M}\mathbf{D}_t) \right] + \beta \Psi(\M) + c
\end{split} \vspace{-0.2cm}
\end{equation}
where $\langle \mathbf{X}, \mathbf{Y} \rangle = \Tr(\t{\mathbf{X}} \mathbf{Y})$, $c$ is a constant independent from $\M$ and \vspace{-0.2cm}
\begin{equation}
\label{eq:CD}
\mathbf{C}_t = \sum_{i=1}^t \A_i\t{\A_i}, \quad
\mathbf{D}_t = \sum_{i=1}^t (\dM_i\A_i - \Y_i)\t{\A_i}. \vspace{-0.2cm}
\end{equation}
Besides, $\mathcal{D}$ is approximated by \vspace{-0.2cm}
\begin{equation}
\mathcal{D}_t = \Bs \cap \left\{ \dM \; \middle| \; \normf{\dM + \mathbf{E}_{t-1}} \leq t\kappa \right\} 
\end{equation}
with \vspace{-0.2cm}
\begin{equation} \label{eq:Et}
\mathbf{E}_t = \sum_{i=1}^t \dM_i. \vspace{-0.2cm}
\end{equation}
Examples of penalizations that will be considered in this study are detailed in the following paragraphs.

		\subsubsection{Abundance penalization} \label{subsec:A_penalization}
		
In this work, the abundance penalization $\Phi$ has been chosen to promote temporally smooth abundances -- in the $\ell_2$-norm sense -- between two consecutive images, leading to \vspace{-0.2cm}
\begin{equation}
\Phi(\A_t) = \frac{1}{2}\normF2{\A_t - \A_{t-1}}. \vspace{-0.15cm}
\end{equation}
As long as $\Phi$ satisfies the regularity condition given in Paragraph \ref{p:cv_proof}, any other type of prior knowledge relative to the abundances can be incorporated into the proposed method.

		\subsubsection{Endmember penalization}
\label{subsec:M_penalization}
Classical endmember penalizations found in the literature consist in constraining the size of the $(\nendm-1)$-simplex whose vertices are the endmembers. In this paper, we consider the mutual distance between each endmember introduced in \cite{Berman2004,Arngren2011}, defined as \vspace{-0.2cm}
\begin{equation} \label{eq:reg_M}
\begin{split}
\Psi(\mathbf{M}) = \frac{1}{2}\sum_{i = 1}^\nendm  \Biggl( \underset{j \neq i}{\sum_{j = 1}^\nendm } \norm2{\mathbf{m}_i - \mathbf{m}_j} \Biggr) 
= \frac{1}{2} \sum_{r = 1}^\nendm  \normF2{\mathbf{MG}_r}
\end{split} \vspace{-0.2cm}
\end{equation}
where \vspace{-0.2cm}
\begin{equation}
\mathbf{G}_r = - \mathbf{I}_\nendm + \mathbf{e}_r \mathbf{1}_\nendm^T \vspace{-0.2cm}
\end{equation}
and $\mathbf{e}_r$ denotes the $r$th canonical basis vector of $\mathbb{R}^\nendm$.

		\subsubsection{Variability penalization} \label{subsec:dM_penalization}
Assuming that the spectral variation between two consecutive images is \emph{a priori} temporally smooth, we consider the following $\ell_2$-norm penalization \vspace{-0.2cm}
\begin{equation}
\Upsilon (\dM_t) = \frac{1}{2} \normF2{\dM_t - \dM_{t-1}}. \vspace{-0.2cm}
\end{equation}
Similarly, any other type of prior knowledge relative to the variability can be considered as long as $\Upsilon$ satisfies the regularity condition given in Paragraph \ref{p:cv_proof}.

\begin{algorithm}[!t]
\footnotesize{
 \KwData{$\M^{(0)}$, $\A_0$, $\dM_0$, $\alpha > 0$, $\beta > 0$, $\gamma > 0$, $\xi \in ]0,1]$}
 \Begin{
  $\mathbf{C}_0 \leftarrow \Zero{\nendm}{\nendm}$\;
  $\mathbf{D}_0 \leftarrow \Zero{\nband}{\nendm}$\;
  $\mathbf{E}_0 \leftarrow \Zero{\nband}{\nendm}$\;
  \For{$t = 1$ \KwTo $\ntime$}{
    \lnlset{select_Y}{a} Random selection of an image $\Y_t$ (random permutation of the image sequence)\;
    \tcp{Abundance and variability estimation by PALM \cite{Bolte2013}, cf. \S \ref{pp:update_A_dM}}
	\lnlset{PALM_A_dM}{b}$(\A_t, \dM_t) \in \argmin{(\A,\dM) \in \Ak \times \D_t} f(\Y_t,\M^{(t)},\A,\dM)$\;
	$\mathbf{C}_t \leftarrow \xi\mathbf{C}_{t-1} + \A_t\t{\A_t}$\; \smallskip
    $\mathbf{D}_t \leftarrow \xi\mathbf{D}_{t-1} + (\dM_t \A_t - \Y_t )\t{\A_t}$\; \smallskip
    $\mathbf{E}_t \leftarrow \xi\mathbf{E}_{t-1} + \dM_t$\; \medskip
    \tcp{Endmember update \cite[Algo. 2]{Mairal2010}, cf. \S \ref{pp:update_M}}
    \lnlset{PALM_M}{c}$\M^{(t)} \leftarrow \argmin{\M \in \mathcal{M}} \gt(\M)$\;
   }
 }
 \KwResult{$\M^{(\ntime)}$, $(\A_t)_{t = 1,\cdots, \ntime}$, $(\dM_t)_{t = 1,\cdots, \ntime}$}
}
 \caption{Online unmixing algorithm.}
 \label{alg:SAA} 
\end{algorithm}

\section{A two-stage stochastic program} \label{sec:TSSP}

	\subsection{Two-stage stochastic program: general principle}
The following lines briefly recall the main ideas presented in the introduction of \cite{Ralph2011}. A two-stage stochastic program is generally expressed as \vspace{-0.1cm}
\begin{equation}
\label{eq:general_problem}
\min_{\M} \mathbb{E}_{\Y,\Z} \Bigl[ f \bigl( \Y,\M,\Z \bigr) \Bigr] \; \text{s.t.} \; \M \in \mathcal{M}, \text{ with} \; \Z \in \mathcal{Z}. \vspace{-0.1cm}
\end{equation}
At the first stage, $\M$ must be chosen before any new data $\Y$ is available. At the second-stage, when $\M$ has been fixed and a new data is acquired, the second-stage variable $\Z$ is computed as the solution (if it is unique and well defined) to the optimization problem \vspace{-0.2cm}
\begin{equation}
\label{eq:second_stage}
\min_{\Z \in \mathcal{Z}} f(\Y,\M,\Z). \vspace{-0.2cm}
\end{equation}
Given an independent and identically distributed (i.i.d) $\ntime$-sample $(\Y_1,\dotsc,\Y_\ntime)$, problem \eqref{eq:general_problem} can be approximated by the sample average approximation (SAA) \vspace{-0.2cm}
\begin{equation}
\label{eq:SAA1}
\min_{\M,\Z_1,\dotsc, \Z_\ntime} \frac{1}{\ntime} \sum_{t=1}^T f(\Y_t,\M,\Z_t), \, \text{s.t.} \, \M \in \mathcal{M}, \, \Z_t \in \mathcal{Z}. \vspace{-0.2cm}
\end{equation}
Moreover, when the second-stage \eqref{eq:second_stage} admits a unique solution, \eqref{eq:SAA1}  can be rewritten as \vspace{-0.2cm}
\begin{align}
\label{eq:SAA2}
&\min_{\M \in \mathcal{M}} \frac{1}{\ntime} \sum_{t=1}^T h(\Y_t,\M) \\
&h(\Y_t,\M) = \min_{\Z \in \mathcal{Z}} f(\Y_t,\M,\Z) \label{eq:prob_A_dM} \vspace{-0.2cm}
\end{align}
which is the SAA corresponding to \vspace{-0.2cm}
\begin{align}
& \min_{\M \in \mathcal{M}} \mathbb{E}_\Y \bigl[ h(\Y,\M)  \bigr] \\
& h(\Y,\M) = \min_{\Z \in \mathcal{Z}} f \bigl(\Y,\M,\Z \bigr) \vspace{-0.2cm}
\end{align}
where the two stages explicitly appear. However, $f$ defined in \eqref{eq:f} is non-convex \wrt{} $\Z = (\A,\dM)$, where $\mathcal{Z} = \Ak \times \mathcal{D}$. Thus, problem \eqref{eq:second_stage} does not admit a unique global minimum, and existing algorithms will at most provide a critical point of $f (\Y,\M,\cdot) + \iota_\mathcal{Z}$, where $\iota_\mathcal{\mathcal{Z}}$ denotes the indicator function of the set $\mathcal{Z}$. In this specific case, a new convergence framework based on a generalized equation has been developed in \cite{Ralph2011}. 
Such a framework enables a convergence result in terms of a critical point $\{\M,\Z_1,\dotsc, \Z_\ntime \}$ of \eqref{eq:SAA1} to be obtained. However, the significant size of the SAA problem \eqref{eq:SAA1} in our case is generally too expensive from a computational point of view. To alleviate this problem, we propose to slightly adapt the work developed in \cite{Mairal2010} to propose an online estimation algorithm described in Algo. \ref{alg:SAA}. This algorithm has the same convergence property as \cite{Mairal2010} provided the non-convex function $f (\Y,\M,\cdot) + \iota_\mathcal{Z}$ exclusively admits locally unique critical points. Further details are given in Paragraph \ref{p:cv_proof}.

\begin{algorithm}[!t]
\footnotesize{
 \KwData{$\Y_t,\M^{(t)},\A_t^{(0)},\dM_t^{(0)},\mathbf{E}_{t-1}$}
 \Begin{
  $k \leftarrow 0$\;
  \While{stopping criterion not satisfied}{
  	\tcp{Abundance update}
    $\A_t^{(k+1)} \leftarrow \text{Update}\left( \A_t^{(k)} \right)$\tcp*[r]{cf. \eqref{eq:update_A}}
     
    \tcp{Variability update}
    $\dM_t^{(k+1)} \leftarrow \text{Update} \left( \dM_t^{(k)} \right)$\tcp*[r]{cf. \eqref{eq:update_dM}}
    
	$k \leftarrow k+1$\;
   }
   $\A_t \leftarrow \A_t^{(k)}$, $\dM_t \leftarrow \dM_t^{(k)}$\;
 }
  \KwResult{$(\A_t,\dM_t)$}
 }
 \caption{\footnotesize{Abundance and variability estimation using PALM.}}
 \label{alg:update_A_dM}
\end{algorithm}

\begin{algorithm}[!ht]
\footnotesize{
 \KwData{$\M^{(t,0)} = \M^{(t-1)},\mathbf{C}_t,\mathbf{D}_t$}
 \Begin{
  $k \leftarrow 0$\;
  \While{stopping criterion not satisfied}{
  	\tcp{Endmember update}
    $\M^{(t,k+1)} \leftarrow \text{Update}\left( \M^{(t,k)} \right)$\tcp*[r]{cf. \eqref{eq:update_M}}
	$k \leftarrow k+1$\;
   }
   $\M^{(t)} \leftarrow \M^{(t,k)}$\;
 }
  \KwResult{$\M^{(t)}$}
 }
 \caption{\footnotesize{Endmember estimation.}}
 \label{alg:update_M}
\end{algorithm}
	\subsection{Parameter estimation} \label{p:parameter_estimation}
	
Whenever an image $\Y_t$ has been received, the abundances and variability are estimated by a proximal alternating linearized minimization (PALM) algorithm \cite{Bolte2013}, which is guaranteed to converge to a critical point $(\A^*,\dM^*)$ of $f (\Y_t,\M,\cdot,\cdot) + \iota_{\mathcal{\Ak}\times \mathcal{D}_t}$. The endmembers are then updated by proximal gradient descent steps, similarly to \cite{Mairal2010}. Further details on the projections involved in this section are given in Appendix \ref{sec:proj}.

		\subsubsection{Abundance and variability estimation} \label{pp:update_A_dM}
		A direct application of \cite{Bolte2013} under the constraints \eqref{eq:constraints} leads to the following abundance update rule \vspace{-0.2cm}
\begin{align}
\label{eq:update_A}
\A_t^{(k+1)} &= \mathcal{P}_{\Ak} \left( \A_t^{(k)} - \frac{1}{L_{1t}^{(k)}} \nabla_{\A} f(\Y_t,\M^{(t)},\A_t^{(k)},\dM_t^{(k)})\right) \vspace{-0.2cm}
\end{align}
where $L_{1t}^{(k)}$ is the Lipschitz constant of $\nabla_{\A} f(\Y_t,\M^{(t)},\cdot,\dM_t^{(k)})$ and \vspace{-0.2cm}
\begin{equation}
\begin{split}
\nabla_{\A} f(&\Y_t, \M^{(t)},\A_t,\dM_t) =  \alpha (\A_t - \A_{t-1}) \\%
& + \t{(\M^{(t)} + \dM_t)} \bigl[(\M^{(t)} + \dM_t)\A_t - \Y_t \bigr] 
\end{split} \vspace{-0.3cm}
\end{equation}
\begin{equation}
L_{1t}^{(k)} = \normf{\t{(\M^{(t)} + \dM_t^{(k)})}(\M^{(t)} + \dM_t^{(k)}) + \alpha \I_\nendm}. \vspace{-0.2cm}
\end{equation}
Note that the projection $\mathcal{P}_{\Ak}$ can be exactly computed using the algorithms proposed in \cite{Duchi2008,Condat2015}. Similarly, the update rule for the variability terms is \vspace{-0.2cm}
\begin{align}
\label{eq:update_dM}
\begin{split}
&\dM_t^{(k+1)} = \\
& \mathcal{P}_{\mathcal{D}_t} \left( \dM_t^{(k)} - \frac{1}{L_{2t}^{(k)}} \nabla_{\dM} f(\Y_t,\M^{(t)},\A_t^{(k+1)},\dM_t^{(k)}) \right)
\end{split} \vspace{-0.2cm}
\end{align}
where $L_{2t}^{(k)}$ is the Lipschitz constant of $\nabla_{\dM} f(\Y_t,\M^{(t)},\A_t^{(k+1)},\cdot)$ and \vspace{-0.2cm}
\begin{equation}
\begin{split}
\nabla_{\dM} f(\Y_t,&\M^{(t)},\A_t,\dM_t) = \gamma (\dM_t - \dM_{t-1}) \\
& + \bigl[(\M^{(t)} + \dM_t)\A_t - \Y_t \bigr] \t{\A_t}
\end{split} \vspace{-0.3cm}
\end{equation}
\begin{equation}
L_{2t}^{(k)} = \normf{\A_t^{(k+1)} \A_t^{(k+1)\text{T}} + \gamma \I_\nendm}. \vspace{-0.2cm}
\end{equation}
Note that the projection $\mathcal{P}_{\D_t}$ can be efficiently approximated using the Dykstra algorithm (see \cite{Boyle1986,Becker2011,Heylen2013}). The resulting algorithm is summarized in Algo. \ref{alg:update_A_dM}.

		\subsubsection{Endmember estimation} \label{pp:update_M}

Similarly to \ref{pp:update_A_dM}, a direct application of the method detailed in \cite{Mairal2010,Bolte2013} yields \vspace{-0.2cm}
\begin{align}
\label{eq:update_M}
\M^{(t,k+1)} &= \mathcal{P}_{+}\left( \M^{(t,k)} - \frac{1}{L_{3t}} \nabla_{\M} \gt (\M^{(t,k)}) \right) \vspace{-0.2cm}
\end{align}
where $\mathcal{P}_{+}$ is the projector on $\left\{ \mathbf{X} \middle| \mathbf{X} \succeq \Zero{\nband}{\nendm} \right\}$ and $L_{3t}$ denotes the Lipschitz constant of $\nabla_{\M} \gt (\M^{(t,k)})$. Note that \vspace{-0.2cm}
\begin{align} 
\nabla_{\M} \gt(\M) &= \M \left( \frac{1}{t} \mathbf{C}_t + \beta \sum_{r=1}^\nendm \mathbf{G}_r \t{\mathbf{G}_r} \right) - \frac{1}{t} \mathbf{D}_t \\
L_{3t} &= \normf{ \frac{1}{t} \mathbf{C}_t + \beta \sum_r \mathbf{G}_r \t{\mathbf{G}_r}}. \vspace{-0.2cm}
\end{align}
The resulting algorithm is summarized in Algo. \ref{alg:update_M}.

	\subsection{Convergence guarantee} \label{p:cv_proof}
To ensure the convergence of the generated endmember sequence $(\M^{(t)})_t$ towards a critical point of the problem \eqref{eq:general_problem}, we make the following assumptions.

\begin{assumption} \label{hyp:gt}
The quadratic functions $\gt$ are strictly convex and admit a Hessian matrix lower-bounded in norm by a constant $\mu_\M >0$.
\end{assumption}
\begin{assumption} \label{hyp:Lipschitz}
The penalty functions $\Phi$, $\Psi$ and $\Upsilon$ are gradient Lipschitz continuous with Lipschitz constant $c_\Phi$, $c_\Psi$ and $c_\Upsilon$ respectively. In addition, $\Phi$ and $\Upsilon$ are assumed to be twice continuously differentiable.
\end{assumption}
\begin{assumption} \label{hyp:hessian}
The function  $f(\Y_t,\cdot,\cdot,\cdot)$ is twice continuously differentiable. The Hessian matrix of $f(\Y_t,\M,\cdot,\cdot)$ -- denoted by $\Hf$ -- is invertible at each critical point $(\A_t^*,\dM_t^*) \in \Qt$. 
\end{assumption}
In practice, Assumption \ref{hyp:gt} may be enforced by adding a penalization term $\frac{\mu_\M}{2} \normF2{\M}$ to the objective function $\gt$, where $\mu_\M$ is a small positive constant. Note that $\mu_\M$ is only a technical guarantee used in the convergence proof reported in Appendix \ref{sec:proof}, which should not be computed explicitly to be able to run the algorithm.
Assumption \ref{hyp:Lipschitz} is only included here for the sake of completeness, in case other penalizations than those given in Section \ref{sec:PLMM} are considered. Indeed, this assumption is obviously satisfied by the penalizations mentioned in this work.
Assumption \ref{hyp:hessian}, crucial to Proposition \ref{prop:convergence}, is further discussed in Appendix \ref{sec:hessian} to ease the reading of this paper.

By adapting the arguments used in \cite{Mairal2010}, the convergence property summarized in Proposition \ref{prop:convergence} can be obtained.
\begin{proposition}[Convergence of $(\M^{(t)})_t$, \cite{Mairal2010}] \label{prop:convergence}
Under the assumptions \ref{hyp:gt},\ref{hyp:Lipschitz} and \ref{hyp:hessian}, the distance between $\M^{(t)}$ and the set of critical points of the hyperspectral unmixing problem \eqref{eq:problem} converges almost surely to 0 when $t$ tends to infinity.
\end{proposition}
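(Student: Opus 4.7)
The plan is to transpose the online-dictionary-learning argument of \cite{Mairal2010} to our setting, with Assumption \ref{hyp:hessian} compensating for the non-convexity of $f$ in $(\A,\dM)$. I would proceed in four steps. First, I would show that $\bigl(\gt(\M^{(t)})\bigr)_t$ converges almost surely by recognizing it as a quasi-martingale. By construction $\gt$ is an upper bound for $g$ and $\M^{(t)}$ is its minimizer on $\mathcal{M}$, so writing the recursion of $\hat{g}_{t+1}(\M^{(t+1)}) - \gt(\M^{(t)})$ and taking conditional expectation with respect to the filtration generated by the past images $(\Y_1,\ldots,\Y_t)$, the positive-variation part can be bounded by a summable sequence (in $1/t^2$, using the compactness of $\mathcal{Y}$, $\mathcal{M}$, $\mathcal{D}$ highlighted in the Remark). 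Fisk's theorem then yields a.s. convergence of $\gt(\M^{(t)})$ and summability of $\sum_t \frac{1}{t}\bigl|\gt(\M^{(t)})-g(\M^{(t)})\bigr|$.

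Second, I would establish that $\gt - g$ converges to $0$ uniformly over the compact set $\mathcal{M}$, which combined with the quasi-martingale step forces $g(\M^{(t)}) - \gt(\M^{(t)}) \to 0$ a.s. This uses a uniform law of large numbers on the class $\{\,h(\cdot,\M) : \M \in \mathcal{M}\,\}$, where $h$ is defined in \eqref{eq:prob_A_dM}. The i.i.d. assumption on $(\Y_t)_t$, the compactness of the supports, and the Lipschitz properties guaranteed by Assumption \ref{hyp:Lipschitz} make this class Glivenko--Cantelli, which is the standard route in \cite{Mairal2010}.

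Third, and this is the step where Assumption \ref{hyp:hessian} plays the key role, I would show that $h(\Y,\cdot)$ is $C^1$ on $\mathcal{M}$ with Lipschitz gradient, even though the inner problem \eqref{eq:prob_A_dM} is non-convex. At any $\M$ and any critical point $(\A^*,\dM^*) \in \Qt$, invertibility of $\Hf$ allows the implicit function theorem to produce a locally unique smooth map $\M \mapsto (\A^*(\M),\dM^*(\M))$. A Danskin-type differentiation then gives $\nabla_{\M} h(\Y_t,\M) = \nabla_{\M} f(\Y_t,\M,\A^*(\M),\dM^*(\M))$, which is Lipschitz in $\M$ under Assumption \ref{hyp:Lipschitz} and the compactness assumptions. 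Consequently, both $g$ and $\gt$ inherit Lipschitz gradients, and $\sup_{\M \in \mathcal{M}}\normf{\nabla \gt(\M) - \nabla g(\M)} \to 0$ almost surely. Combining the first-order optimality condition that $\M^{(t)}$ satisfies for $\min_{\M \in \mathcal{M}}\gt(\M)$ (whose solution is unique under Assumption \ref{hyp:gt}) with this uniform gradient convergence, any cluster point of $(\M^{(t)})_t$ must satisfy the first-order stationarity conditions of problem \eqref{eq:problem} on the compact set $\mathcal{M}$, which concludes the proof.

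The hard part will be the third step. Without Assumption \ref{hyp:hessian} the second-stage selector is a set-valued map and Danskin's theorem fails, so the whole Mairal framework collapses; the implicit function theorem saves the argument only locally around each critical point of $\Qt$, meaning one must also verify that the PALM subroutine of Algo. \ref{alg:update_A_dM} delivers a measurable selection that stays in a single implicit-function branch along the trajectory. The bookkeeping between the running weighting by $\xi$ in $\mathbf{C}_t,\mathbf{D}_t,\mathbf{E}_t$ and the $1/t$ averaging implicit in $\gt$ will also have to be handled carefully when bounding the quasi-martingale variations, as will the time-varying approximation $\mathcal{D}_t$ of $\mathcal{D}$, which must be shown to be asymptotically tight with probability one.
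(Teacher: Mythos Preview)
Your proposal is correct and follows essentially the same route as the paper: both transplant the Mairal argument, with Assumption~\ref{hyp:hessian} and the implicit function theorem replacing strong convexity of the inner problem to recover $C^1$ regularity of the value function $h(\Y,\cdot)$ and Lipschitzness of $\nabla_\M g$. The paper's sketch singles out two ingredients for explicit proof---a lemma giving $\normf{\M^{(t+1)}-\M^{(t)}}=O(1/t)$ a.s.\ (from the strong convexity in Assumption~\ref{hyp:gt} and Lipschitzness of $\gt$ on the compact $\mathcal{M}$) and the differentiability proposition you describe in your third step---and then defers the quasi-martingale and stationarity arguments verbatim to \cite[Prop.~2--3]{Mairal2010}; you instead narrate the full four-stage Mairal argument but do not isolate the $O(1/t)$ step-size bound, relying on a claim of uniform convergence of $\nabla\gt$ to $\nabla g$ to close the loop. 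That last claim is slightly stronger than what Mairal actually uses (he controls the gradient gap only along the trajectory $\M^{(t)}$, precisely via the $O(1/t)$ lemma), so when you write it up you should either justify the uniform gradient Glivenko--Cantelli statement or, more simply, insert the $O(1/t)$ lemma and follow \cite[Prop.~3]{Mairal2010} directly.
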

\begin{proof}
See Appendix \ref{sec:proof}.
\end{proof}

	\subsection{Computational complexity}

Dominated by matrix-product operations, the per image overall complexity of the proposed method is of the order \vspace{-0.2cm}
\begin{equation*}
\mathcal{O}\Bigl\{ \bigl[ \nband \nendm ( \nbpix + \Nd) + \nendm^2(\nband + \nbpix) \bigr]\Np + \Ni \nband \nendm^2 \Bigr\} \vspace{-0.2cm}
\end{equation*}
where $\Nd,\Np,\Ni$ denote the number of iterations for the Dykstra algorithm involved in the variability projection \eqref{eq:update_dM}, the PALM algorithm and the endmember update respectively. To be more explicit, the computation time for one image of size $100 \times 100$ composed of $\nband = 173$ bands is approximately \SI{6}{\second} for a \textsc{Matlab} implementation with an Intel(R) Core(TM) i5-4670 CPU @ 3.40GHz. Note that the PALM iterations (Algo. \ref{alg:update_A_dM}) and the endmember updates (Algo. \ref{alg:update_M}) can be parallelized if needed due to the separability of the objective function $f$ chosen (separability \wrt the column of the abundance matrix, and \wrt the rows of the endmember and variability matrices).

\begin{figure}[!t]
\centering
\subfloat{
\includegraphics[keepaspectratio,height=0.2\textheight , width=0.23\textwidth]{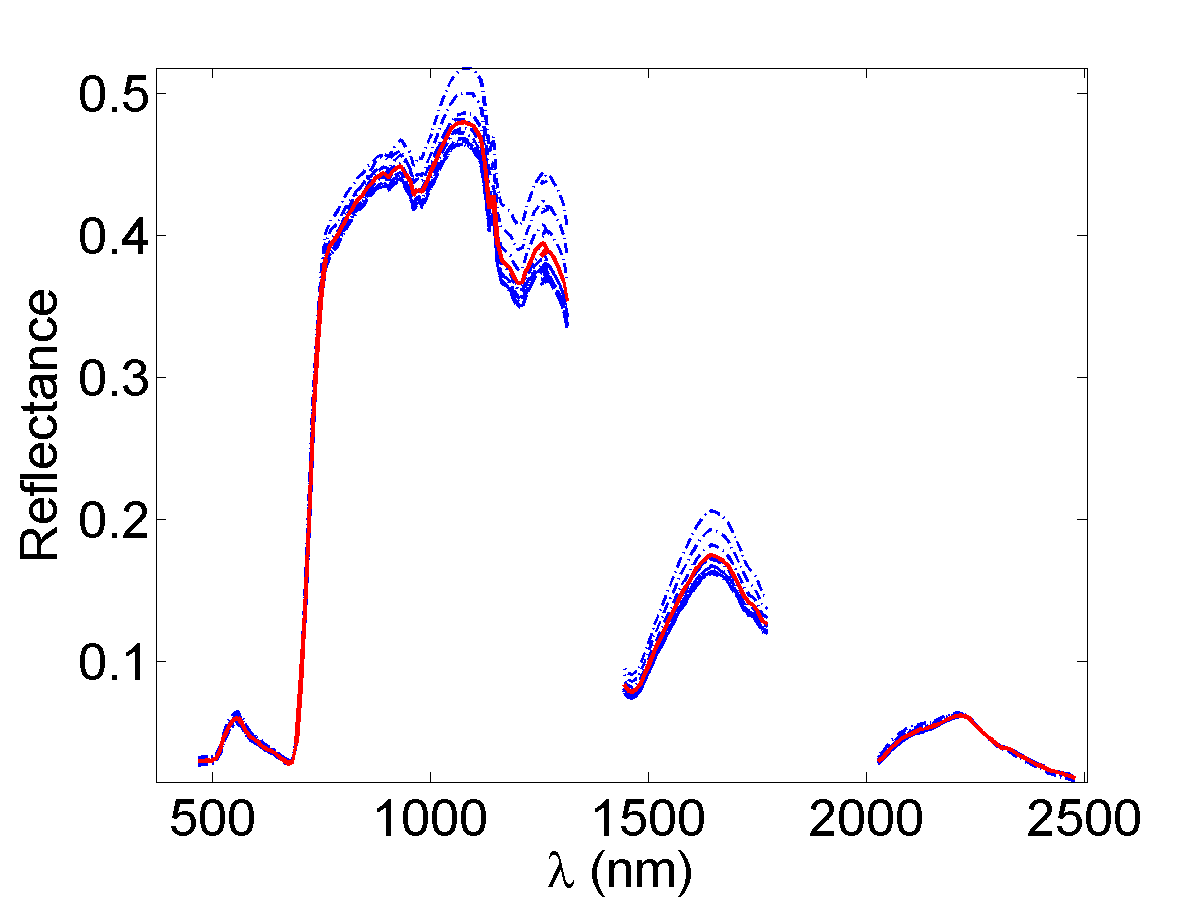}}
\subfloat{
\includegraphics[keepaspectratio,height=0.2\textheight , width=0.23\textwidth]{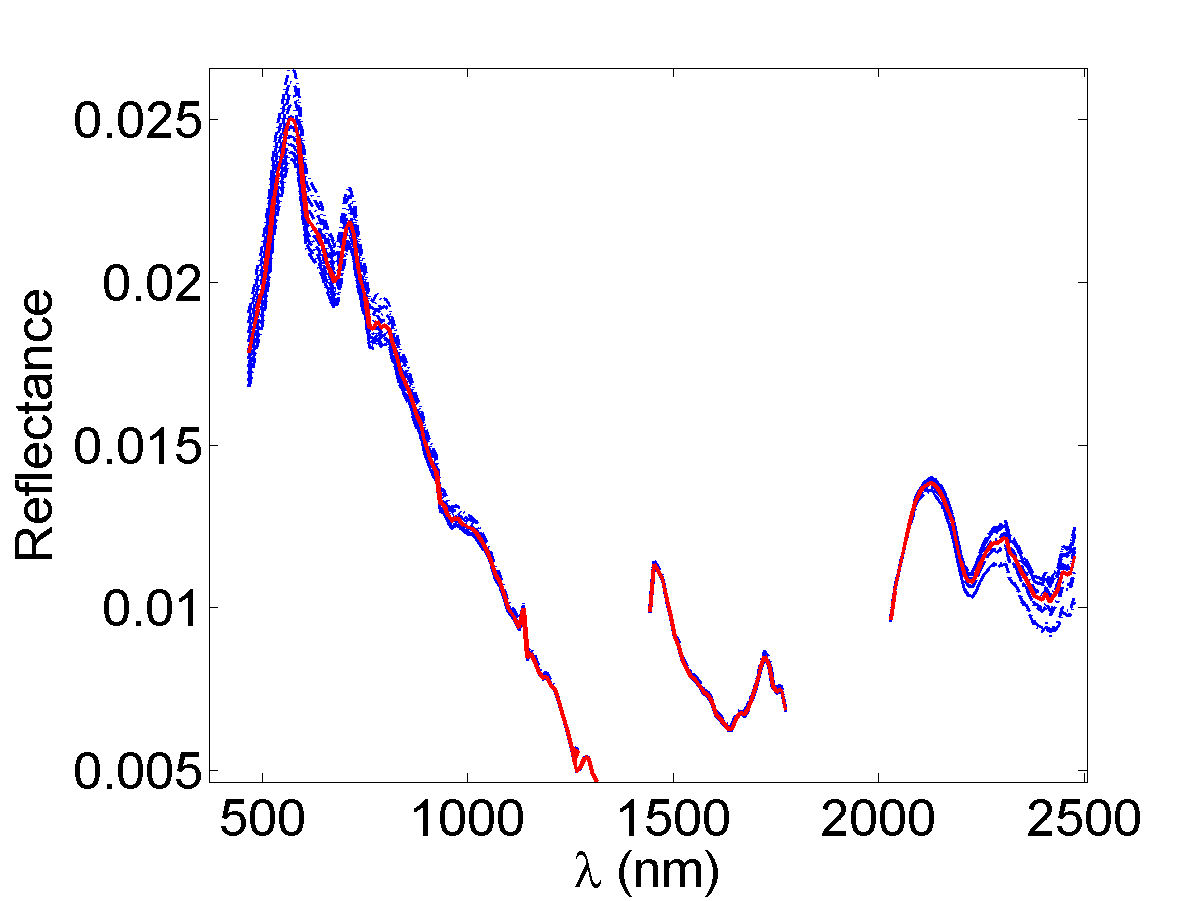}}
\vspace{-0.25cm}
\\
\subfloat{
\includegraphics[keepaspectratio,height=0.2\textheight , width=0.23\textwidth]{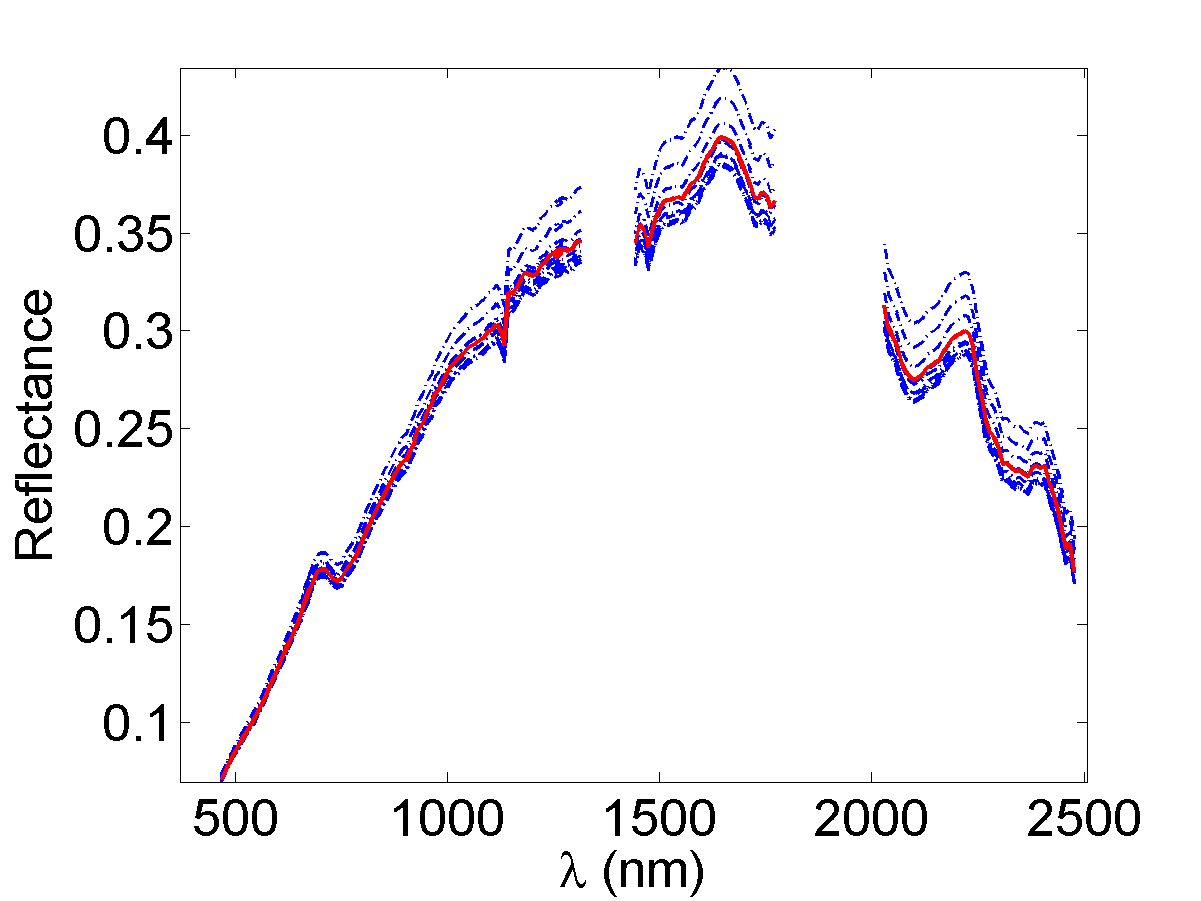}}
\caption{Reference endmembers (red lines) and the corresponding instances under spectral variability (blue lines) involved in the synthetic HS images.}
\label{fig:endm_variability} \vspace{-0.5cm}
\end{figure}


\begin{figure*}[t]
\centering
\includegraphics[keepaspectratio,height=0.3\textheight , width=\textwidth]{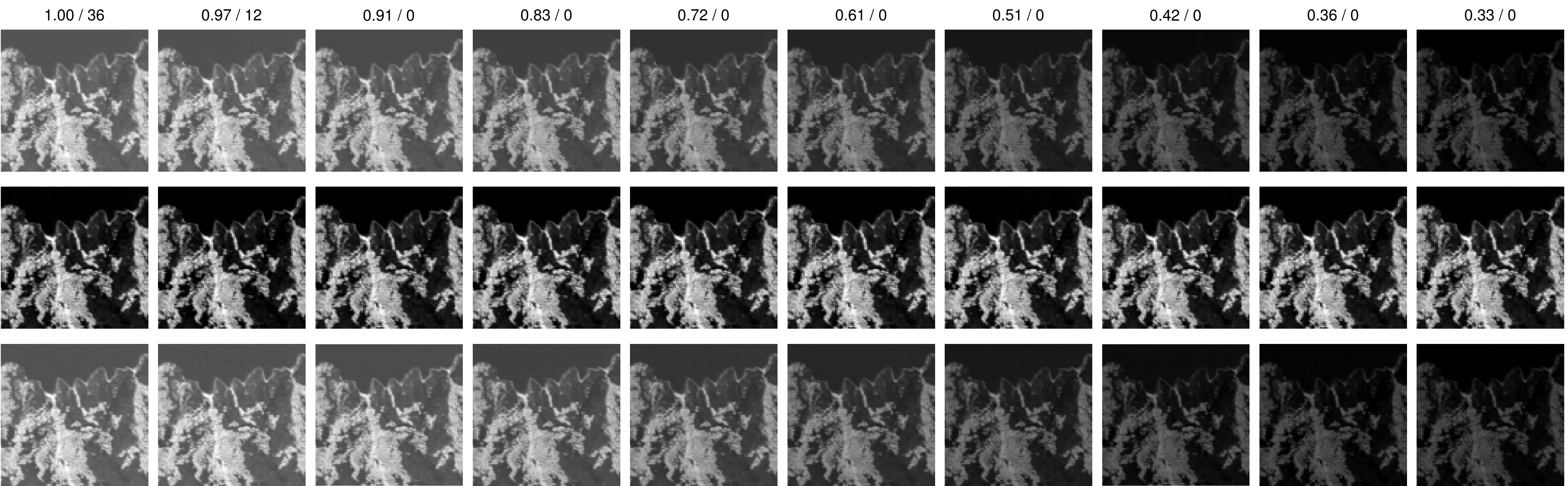}
\caption{Abundance maps of the first endmember used in the synthetic mixtures (theoretical abundances on the first line, VCA/FCLS on the second line, proposed method on the third line). The top line indicates the theoretical maximum abundance value and the true number of pixels whose abundance is greater than 0.95 for each time instant.}
\label{fig:synth_abundance1} \vspace{-0.5cm}
\end{figure*}
\begin{figure*}[th]
\centering
\includegraphics[keepaspectratio,height=0.3\textheight , width=\textwidth]{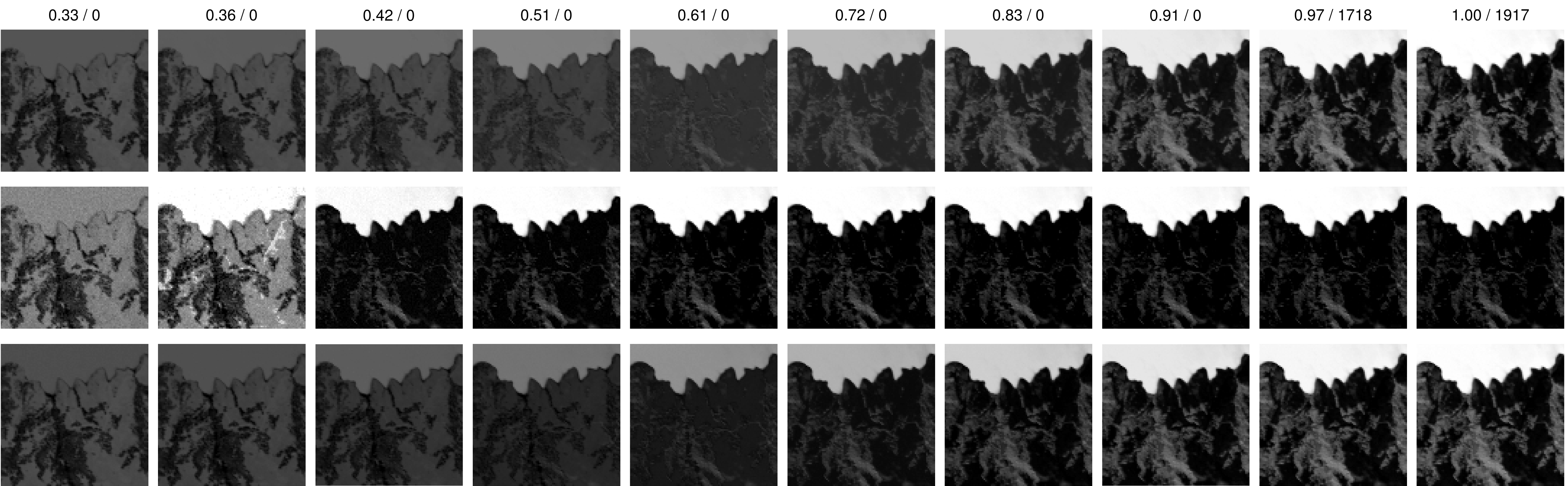}
\caption{Abundance maps of the second endmember used in the synthetic mixtures (theoretical abundances on the first line, VCA/FCLS on the second line, proposed method on the third line). The top line indicates the theoretical maximum abundance value and the true number of pixels whose abundance is greater than 0.95 for each time instant.}
\label{fig:synth_abundance2} \vspace{-0.5cm}
\end{figure*}
\begin{figure*}[t]
\centering
\includegraphics[keepaspectratio,height=0.3\textheight , width=\textwidth]{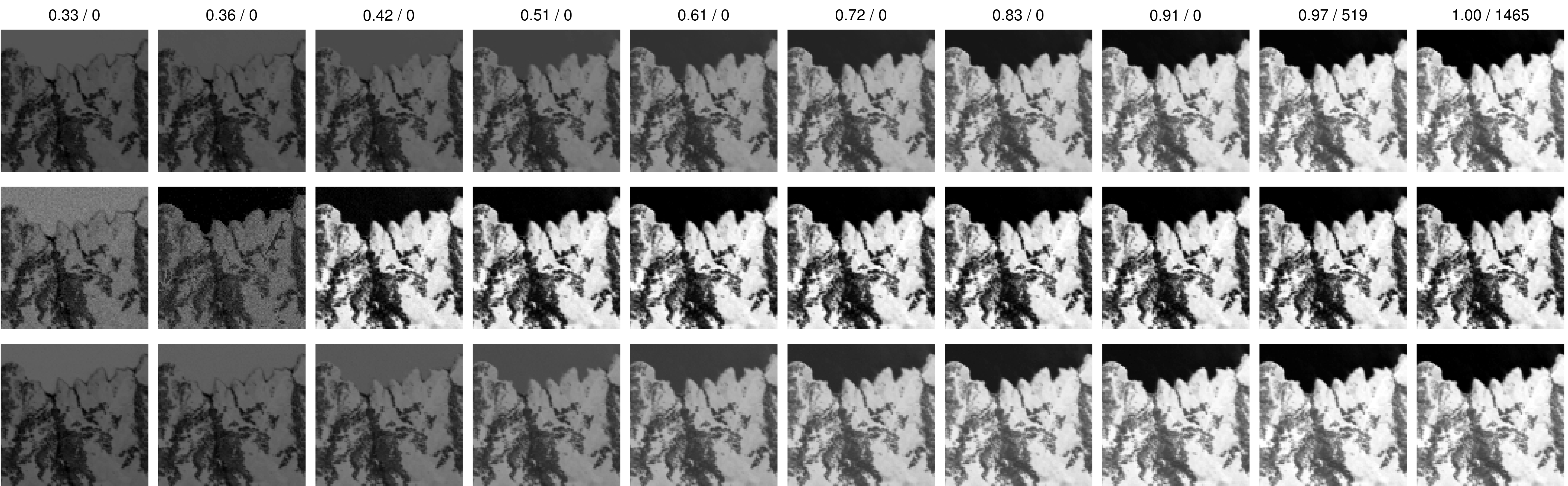}
\caption{Abundance maps of the third endmember used in the synthetic mixtures (theoretical abundances on the first line, VCA/FCLS on the second line, proposed method on the third line). The top line indicates the theoretical maximum abundance value and the true number of pixels whose abundance is greater than 0.95 for each time instant.}
\label{fig:synth_abundance3} \vspace{-0.7cm}
\end{figure*}

\begin{figure}[tbhp]
\centering
\subfloat[1][]{
\includegraphics[keepaspectratio,height=0.15\textheight , width=0.15\textwidth]{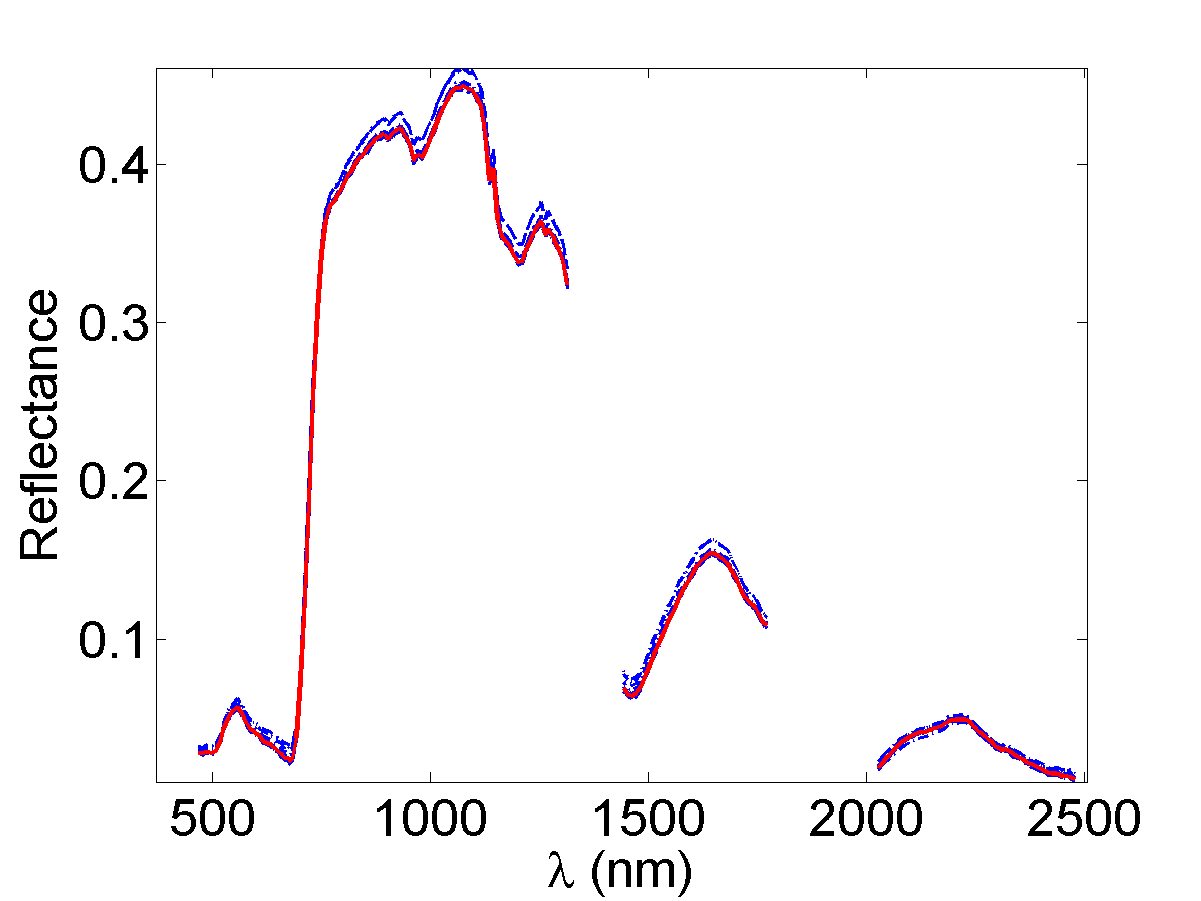}
\label{fig:synth_endm1}}
\subfloat[2][]{
\includegraphics[keepaspectratio,height=0.15\textheight , width=0.15\textwidth]{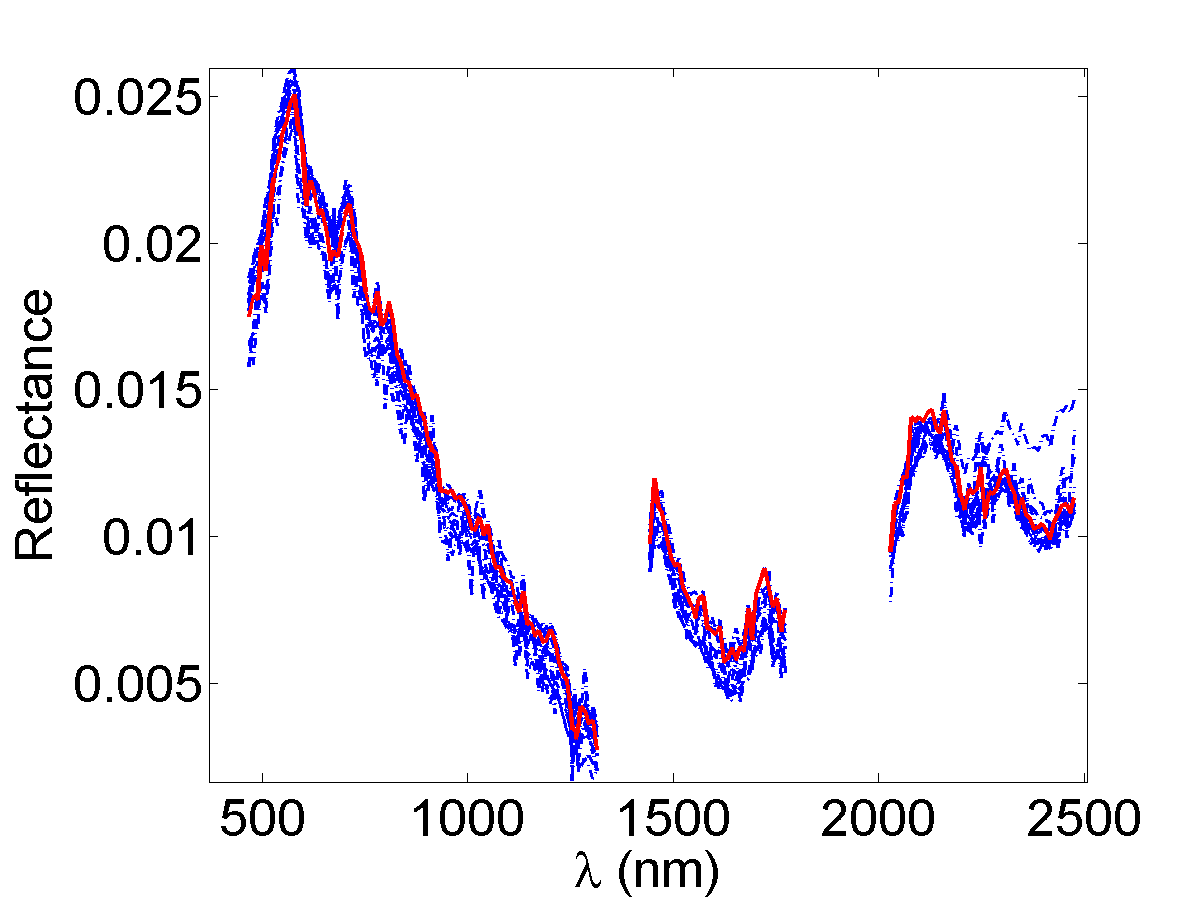}
\label{fig:synth_endm2}}
\subfloat[3][]{
\includegraphics[keepaspectratio,height=0.15\textheight , width=0.15\textwidth]{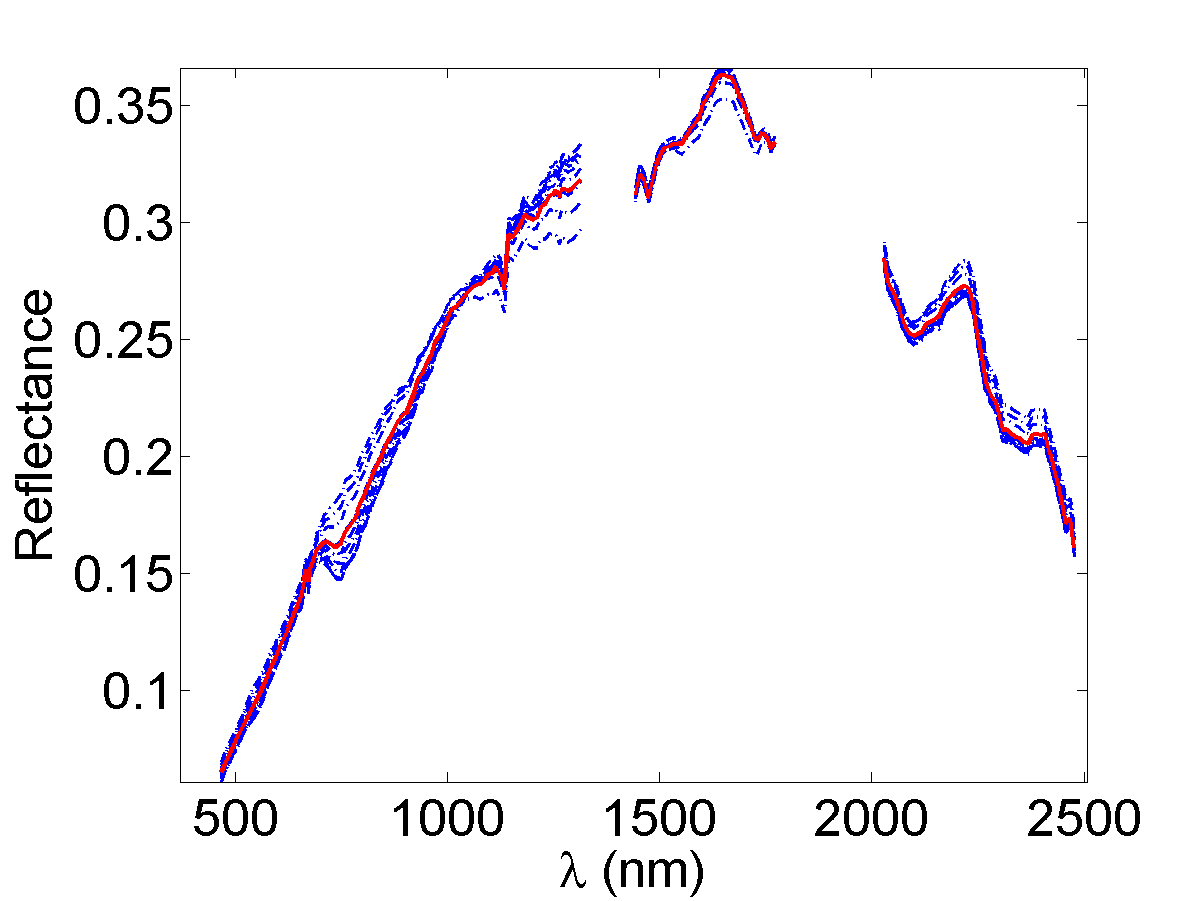}
\label{fig:synth_endm3}}
\vspace{-0.25cm}
\\
\subfloat[1][(VCA)]{
\includegraphics[keepaspectratio,height=0.15\textheight , width=0.15\textwidth]{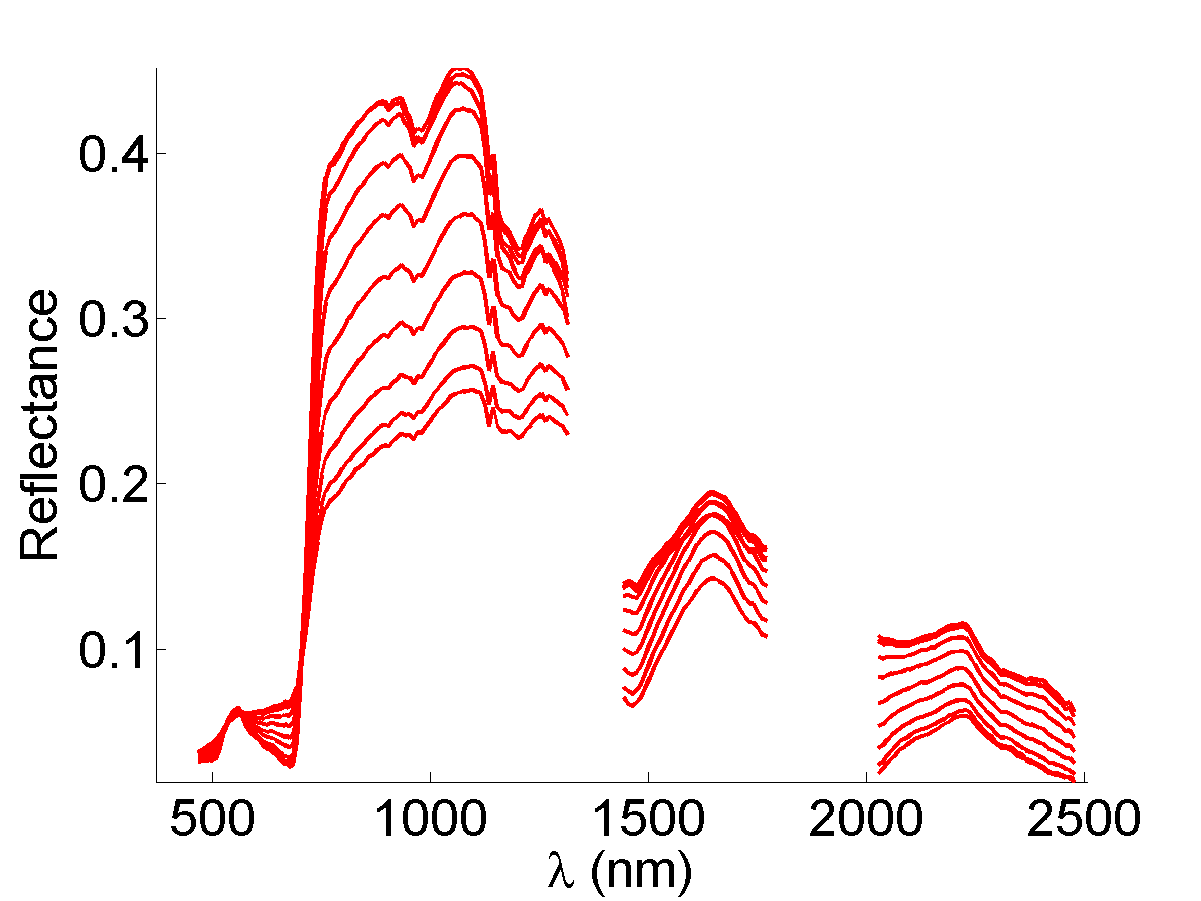}
\label{fig:synth_vca_endm1}}
\subfloat[2][(VCA)]{
\includegraphics[keepaspectratio,height=0.15\textheight , width=0.15\textwidth]{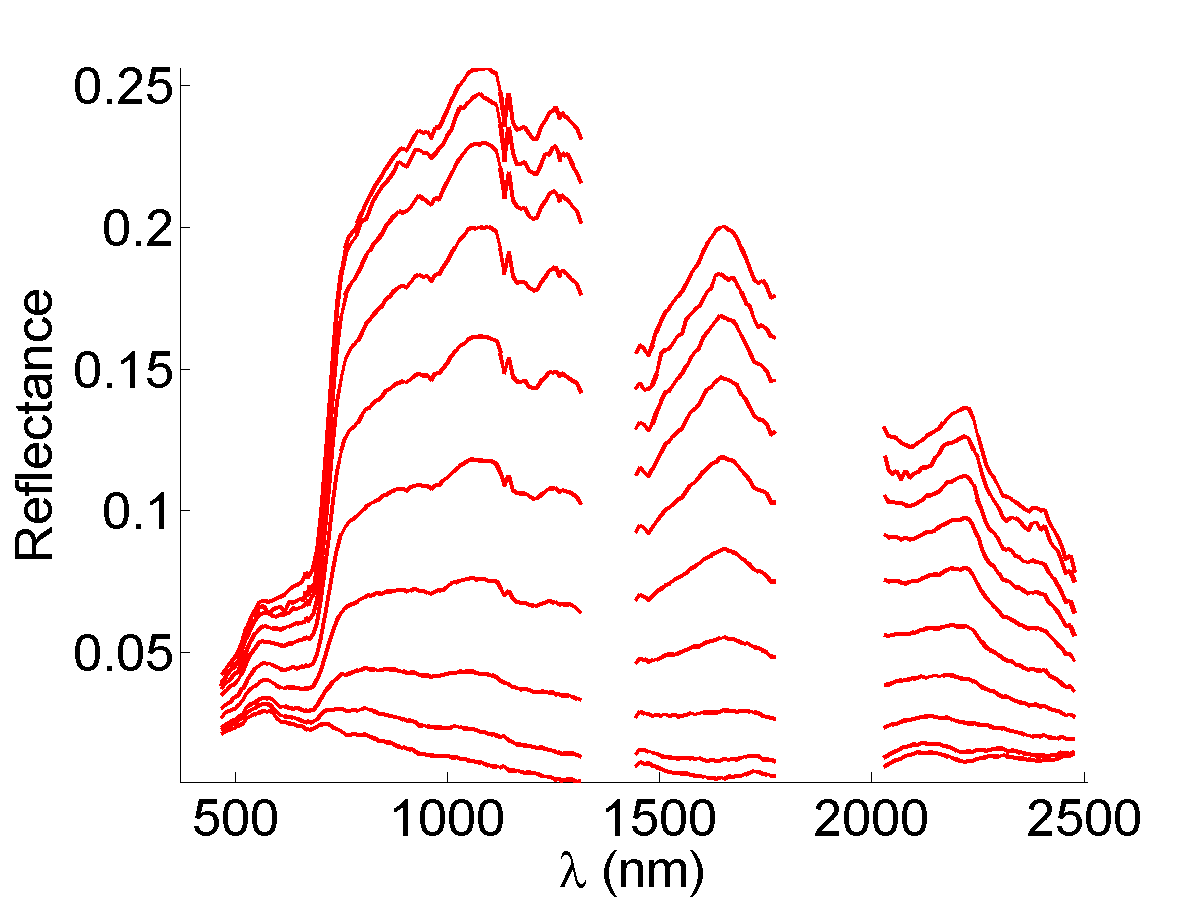}
\label{fig:synth_vca_endm2}}
\subfloat[3][(VCA)]{
\includegraphics[keepaspectratio,height=0.15\textheight , width=0.15\textwidth]{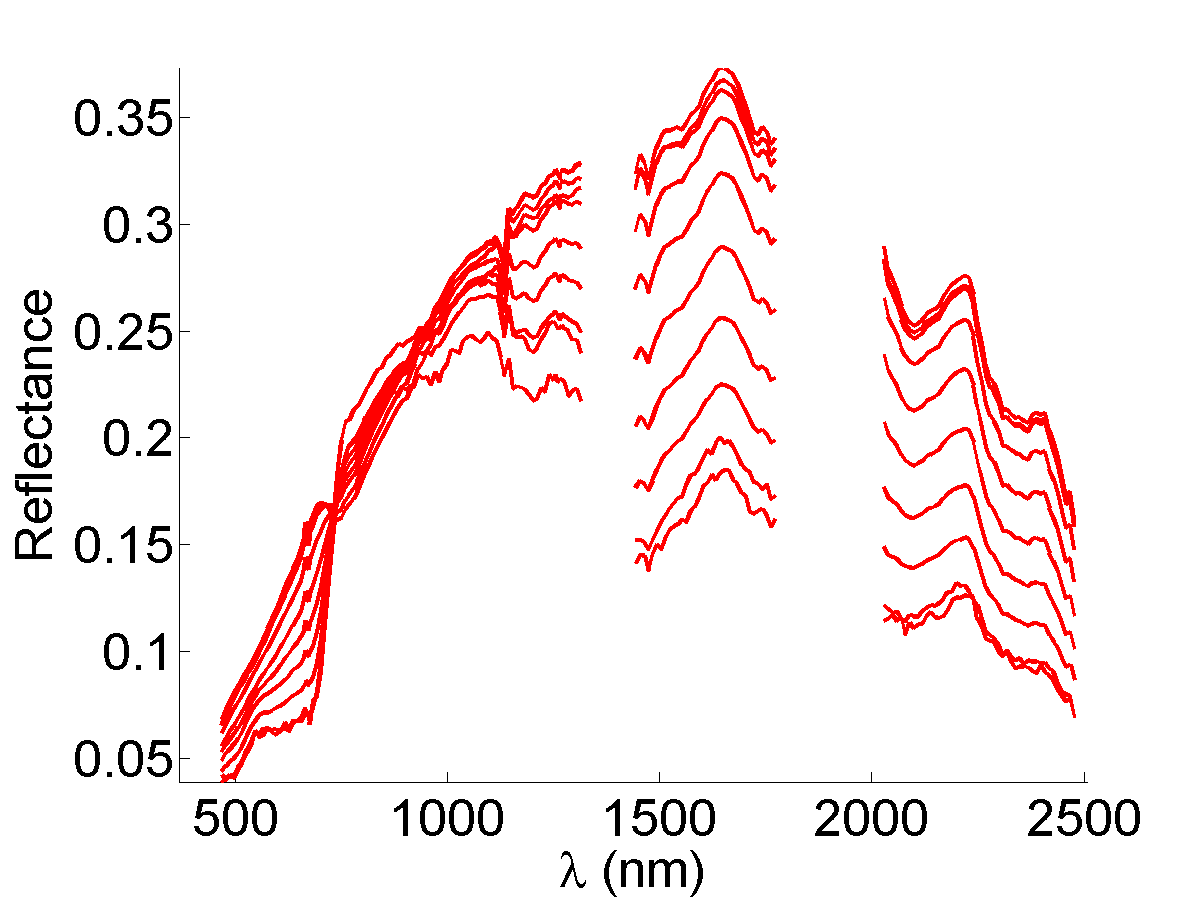}
\label{fig:synth_vca_endm3}}
\caption{Estimated endmembers on the synthetic hyperspectral time series (PLMM endmembers in red with variability in blue dotted lines on the first line, VCA-extracted endmembers on the second line, SISAL endmembers are omitted since very similar to those obtained with VCA).}
\label{fig:synth_endm} \vspace{-0.5cm}
\end{figure}
\begin{table}[!t]
\caption{Parameters used in the experiments.}
\label{tab:param}
	\begin{center}
		\begin{tabular}{@{}lll@{}} \toprule
 		 					& Synthetic data	& Real data     \\ \cmidrule{2-3} 	
$\sigma^2$					& 1	   & 1		 	\\ 
$\kappa^2$					& 0.1  & 0.01		 \\		 
$\alpha$					& $10^{-4}$   & 0		 	\\ 
$\beta$						& $10^{-3}$	& $10^{-4}$		 	\\
$\gamma$					& $3 \times 10^{-5}$	& 0		 	\\ 
$\Nd$			 			& 50	   & 50	  	 	\smallskip\\
$\Np$ 						& 50	   & 50	  	 	\smallskip\\
$\Ni$ 						& 50	   & 50	  	 	\smallskip\\
$\Ne$ 						& 10	   & 10	  	 	\\
$\xi$			 			& 0.98	   & 0.98	  	 	\\ \bottomrule
		\end{tabular}
	\end{center} \vspace{-0.5cm}
\end{table} 

\section{Experiment with synthetic data} \label{sec:simulations}
	This section considers an HS image sequence composed of 10 images of size $98~\times~102$, each image composed of $173$ bands. The images correspond to linear mixtures of 3, 6 and 10 endmembers affected by smooth time-varying variability. The synthetic abundance maps of this scenario vary smoothly from one image to another. Note that the pure pixel assumption is not satisfied for all images of the experiment with $\nendm~=~3$ endmembers in order to assess the algorithm performance in a challenging scenario. The synthetic linear mixtures have been corrupted by additive white Gaussian noise to ensure a resulting signal-to-noise ratio of $\text{SNR} = \SI{30}{\text{dB}}$. Additional results for mixtures corrupted by colored Gaussian noise are available in \cite[App. D]{Thouvenin2015TR}.

	In order to introduce controlled spectral variability, the endmembers involved in the mixtures have been generated using the product of reference endmembers with randomly generated piecewise-affine functions as in \cite{Thouvenin2015}. The corresponding perturbed endmembers used in the experiment are depicted in Fig. \ref{fig:endm_variability}. Note that different affine functions have been considered at each time instant for each endmember.
			
	\subsection{Compared methods}
The results of the proposed algorithm have been compared to those obtained with several classical linear unmixing methods performed individually on each image of the time series. The methods are recalled below with their most relevant implementation details. All the methods requiring an appropriate initialization have been initialized with VCA/FCLS.
\begin{enumerate}
\item VCA/FCLS (no variability): for each image, the endmembers are first extracted using the vertex component analysis (VCA) \cite{Nascimento2005} which requires pure pixels to be present in the analyzed images. The abundances are then estimated for each pixel by solving a Fully Constrained Problem (FCLS) with ADMM \cite{Bioucas2010}; 
\item SISAL/FCLS (no variability): the endmembers are first extracted using the simplex identification via split augmented Lagrangian (SISAL) \cite{Bioucas2009}. Note that the pure pixel assumption is not required to apply this method. The tolerance for the stopping rule has been set to $10^{-3}$. The abundances are then estimated for each pixel by FCLS;
\item $\ell_{1/2}$ NMF (no variability): the algorithm described in \cite{Qian2011} is applied to each image, with a stopping criterion set to $10^{-3}$ and a maximum of $300$ iterations. The regularization parameter has been set as in \cite{Qian2011};
\item BCD/ADMM: the algorithm described in \cite{Thouvenin2015} is applied to each image with a stopping criterion set to $10^{-3}$. The endmember regularization recalled in \eqref{eq:reg_M} has been used, with a parameter set to the same value as the one used for the proposed method. The abundance regularization parameter (spatial smoothness) has been set to $10^{-4}$, and the variability regularization parameter has been set to 1;
\item Proposed method: endmembers are initialized with VCA applied to the union of the pixels belonging to the $\nendm-1$ convex hull of each image. The abundances are initialized by FCLS, and the variability matrices are initialized with all their entries equal to $0$. Whenever the algorithm is applied to a previously processed image, the previous abundance and variability estimates are taken as a warm-restart. Algo. \ref{alg:update_A_dM} (PALM algorithm) is stopped after $\Np$ iterations and the Dykstra algorithm used to compute the projection in \eqref{eq:update_dM} is iterated $\Nd$ times. Moreover, Algo. \ref{alg:update_M} is stopped after $\Ni$ iterations. Finally, Algo. \ref{alg:SAA} is stopped after $\Ne$ cycles -- referred to as epochs -- on the randomly permuted training set to approximately obtain i.i.d. samples \cite{Mairal2010}. In particular, the number of cycles $N_{\text{epochs}}$ and sub-iterations $N_{\text{iter}}$ have been empirically chosen to obtain a compromise between the estimation accuracy and the implied computational cost. We also included a constant forgetting factor $\xi \in (0,1)$ in order to slowly forget the past data. The closer to one $\xi$ is, the more slowly the past data are forgotten.
\end{enumerate}

The performance of the algorithm has been assessed in terms of endmember estimation using the average spectral angle mapper (aSAM) defined as \vspace{-0.2cm}
\begin{equation}
\aSAM(\M) = \frac{1}{\nendm } \sum_{r=1}^\nendm   \arccos \left( \frac{ \t{\mathbf{m}_r} \widehat{\mathbf{m}}_r } { \lVert \mathbf{m}_r \rVert_2 \lVert \widehat{\mathbf{m}}_r \rVert_2   } \right) \vspace{-0.2cm}
\end{equation}{}%
as well as in terms of abundance and perturbation estimation through the global mean square errors (GMSEs) \vspace{-0.2cm}
\begin{align}
\GMSE(\A)  & = \frac{1}{\ntime \nendm \nbpix} \sum_{t=1}^\ntime \lVert \A_t - \widehat{\A}_t
\rVert_{\text{F}}^2 \\
\GMSE(\dM) & = \frac{1}{\ntime \nband \nendm } \sum_{t=1}^\ntime \lVert \dM_t - \widehat{\dM}_t \rVert_{\text{F}}^2 . \vspace{-0.2cm}
\end{align}
As a measure of fit, the following reconstruction error (RE) has been considered \vspace{-0.2cm}
\begin{align}
\label{eq:RE}
\RE &= \frac{1}{\ntime \nband \nbpix} \sum_{t=1}^\ntime \lVert \Y_t - \widehat{\Y}_t \rVert_{\text{F}}^2
%
\end{align}
where $\widehat{\Y}_t$ is the matrix formed of the pixels reconstructed with the parameters estimated for the image $t$. 


	\subsection{Results}

The parameters used for the proposed algorithm, which have been adjusted by cross-validation, are detailed in Table \ref{tab:param}. For the dataset associated with mixtures of $\nendm = 3$ endmembers, the abundance maps obtained by the proposed method are compared to those of VCA/FCLS in Figs. \ref{fig:synth_abundance1} to \ref{fig:synth_abundance3}, whereas the corresponding endmembers are displayed in Fig. \ref{fig:synth_endm}. The abundance maps obtained by SISAL/FCLS, $\ell_{1/2}$ NMF and BCD/ADMM, somewhat similar to those obtained by VCA/FCLS, are included in a separate technical report \cite{Thouvenin2015TR}, along with a more detailed version of Table \ref{tab:results_synth} and the endmembers extracted by all the unmixing strategies. The performance of the unmixing methods is finally reported in Table \ref{tab:results_synth}, leading to the following conclusions.

\begin{itemize}
\item The proposed method is more robust to the absence of pure pixels in some images than both VCA/FCLS and SISAL/FCLS. Note that $\ell_{1/2}$ NMF and BCD/ADMM converge to poor local optima, which directly results from the poor performance of VCA in this specific context. On the contrary, the estimated abundances obtained with the proposed method (second line of Figs. \ref{fig:synth_abundance1} to \ref{fig:synth_abundance3}) are closer to the ground truth (first line) than VCA/FCLS (third line). This observation is confirmed by the results given in Table \ref{tab:results_synth};
\item The proposed method provides competitive unmixing results while allowing temporal endmember variability to be estimated for each endmember (see Fig. \ref{fig:synth_endm});
\item The abundance GMSEs and the REs estimated with the proposed method are lower or comparable to those obtained with VCA/FCLS and SISAL/FCLS applied to each image individually (see Table \ref{tab:results_synth}), without introducing much more degrees of freedom into the underlying model when compared to BCD/ADMM;
\item Even though the performance of the proposed method degrades with the number of endmembers, the results remain better or comparable to those of the other methods.
\end{itemize}

Whenever an endmember is scarcely present in one of the images, the proposed method outperforms VCA/FCLS as can be seen in Figs. \ref{fig:synth_abundance1} to \ref{fig:synth_abundance3}. Note that the maximum theoretical abundance value and the number of pixels whose abundances are greater than 0.95 are mentioned on the top line of Figs. \ref{fig:synth_abundance1} to \ref{fig:synth_abundance3}, to assess the difficulty of recovering each endmember in each image. This result was expected, since VCA is a pure pixel-based unmixing method. 

\setlength\columnsep{0.1pt}
\begin{table}[!t] 
\vspace{-0.5cm}
\centering
\caption{Simulation results on synthetic data (aSAM($\M$) in (\textdegree), GMSE($\mathbf{A}$)$\times 10^{-2}$, GMSE($\mathbf{dM}$)$\times 10^{-4}$, RE $\times 10^{-4}$, time in (s)).}
	\begin{center}
	\resizebox{0.48\textwidth}{!}{%
		\begin{tabular}{@{}llccccc@{}} \toprule
&		   	  & aSAM($\M$) & GMSE($\A$) & GMSE($\dM$) & RE & time \\ \cmidrule{1-7}
\multirow{5}{*}{\rotatebox{90}{$\nendm = 3$}}
&VCA/FCLS      &  15.76  & 4.22 &/ & 0.413 & \textbf{1} \\
&SISAL/FCLS	  &  15.88  & 3.68 &/ & \textbf{0.375} & 5	\\
&$\ell_{1/2}$ NMF  & 18.54 & 8.20 &/ & 0.123 & 226 \\
&BCD/ADMM	      & 15.91 & 3.47 & 4.03 & 0.282 & 379	\\
&Proposed 	  & \textbf{1.88} & \textbf{0.23} & \textbf{1.02} & \textbf{0.375} & 168 \\ \cmidrule{1-7}
\multirow{5}{*}{\rotatebox{90}{$\nendm = 6$}}
&VCA/FCLS      &  2.14  & \textbf{0.14} & / & 1.48 & \textbf{4} \\
&SISAL/FCLS	  &  1.67  & 0.83 & / & \textbf{1.20} & 5	\\
&$\ell_{1/2}$ NMF  & 3.41 & 0.45 & / & 1.53 & 332 \\
&BCD/ADMM	      & 2.27 & 0.29 & \textbf{1.31} & \textbf{1.20} & 1066	\\
&Proposed 	  & \textbf{1.49} & 0.17 & 2.69 & 1.22 &344 \\ \cmidrule{1-7}
\multirow{5}{*}{\rotatebox{90}{$\nendm = 10$}}
&VCA/FCLS      &  3.52  & 7.24 & / & 4.63 & \textbf{5} \\
&SISAL/FCLS	  &  9.53  & 3.32 & / & \textbf{1.67} & 6	\\
&$\ell_{1/2}$ NMF  & 5.58 & 7.03 & / & 3.90 & 279 \\
&BCD/ADMM	      & 3.27 & 6.45 & \textbf{7.2} & 1.70 & 735 \\
&Proposed 	  & \textbf{2.83} & \textbf{0.43} & 8.9 & 1.99 & 204 \\
    \bottomrule
		\end{tabular}
}
	\end{center}
\label{tab:results_synth} \vspace{-0.2cm}
\end{table}

	\subsection{Hyper-parameter influence on the reconstruction error} \label{sec:tuning}
	
Considering the significant number of hyper-parameters to be tuned (i.e., $\alpha,\beta,\gamma,\sigma,\kappa$), a full sensitivity analysis is a challenging task, which is further complexified by the non-convex nature of the problem considered. To alleviate this issue, each parameter has been individually adjusted while the others were set to a priori reasonable values (i.e., $(\alpha,\beta,\gamma,\sigma^2,\kappa^2) = (10^{-2},10^{-4},10^{-4},\hat{\sigma}^2,10^{-3})$, where $\hat{\sigma}^2 = 0.0372$ denotes the theoretical average energy of the variability introduced in the synthetic dataset used for this analysis). The appropriateness of a given range of values has been evaluated in terms of the RE of the recovered solution. The results reported in Fig. \ref{fig:re_var} suggest that the proposed method is relatively robust to the choice of the hyper-parameters. 
More precisely, as can be seen in Figs. \ref{fig:re_M} and \ref{fig:re_dM}, only $\beta$ and $\gamma$ may induce oscillations (of very small amplitude) in the RE. Based on this analysis, it is interesting to note that the interval $[2\times 10^{-3},10^{-2}]$ can be chosen in practice to obtain reasonable reconstruction errors.

To conclude, the two following remarks can be made on the choice of $\sigma$ and $\kappa$:
\begin{itemize}
\item the value chosen for $\sigma$ results from an empirical compromise between the risk to capture noise into the variability terms ($\sigma$ too large) and the risk to lose information ($\sigma$ too small). The sensitivity analysis conducted in Fig. \ref{fig:re_sigma2} shows that $\sigma^2 \in [10^{-1},1]$ provides interesting results for this experiment;
\item $\kappa$ should be set to a value ensuring that $\M$ reflects the average spectral behavior of the perturbed endmembers. Fig. \ref{fig:re_kappa2} shows that $\kappa^2 \in [10^{-3},1]$ provides interesting results for the synthetic dataset used in the experiment.
\end{itemize}
	
\begin{figure}[t!]
\centering
\subfloat[1][]{
\includegraphics[keepaspectratio,height=0.2\textheight , width=0.2\textwidth]{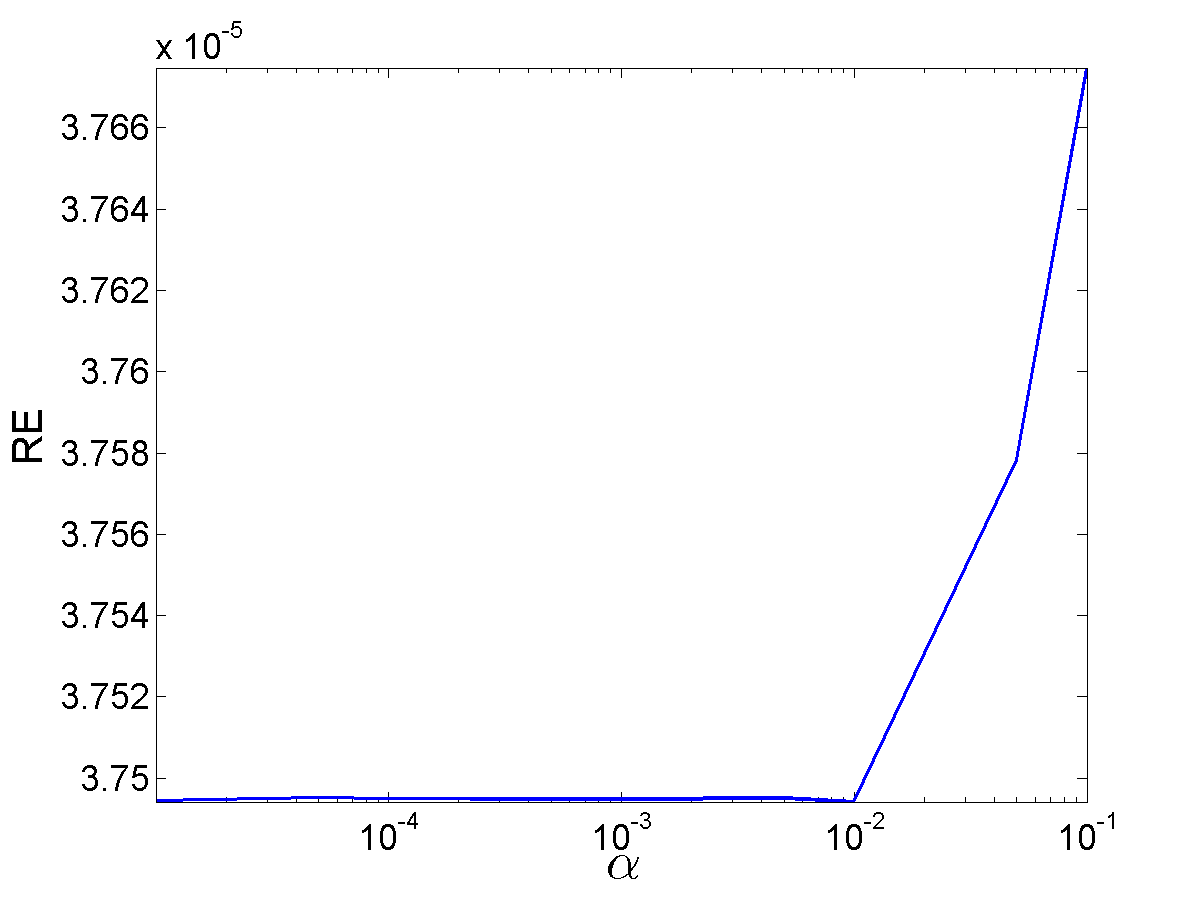}
\label{fig:re_A}}
\subfloat[2][]{
\includegraphics[keepaspectratio,height=0.2\textheight , width=0.2\textwidth]{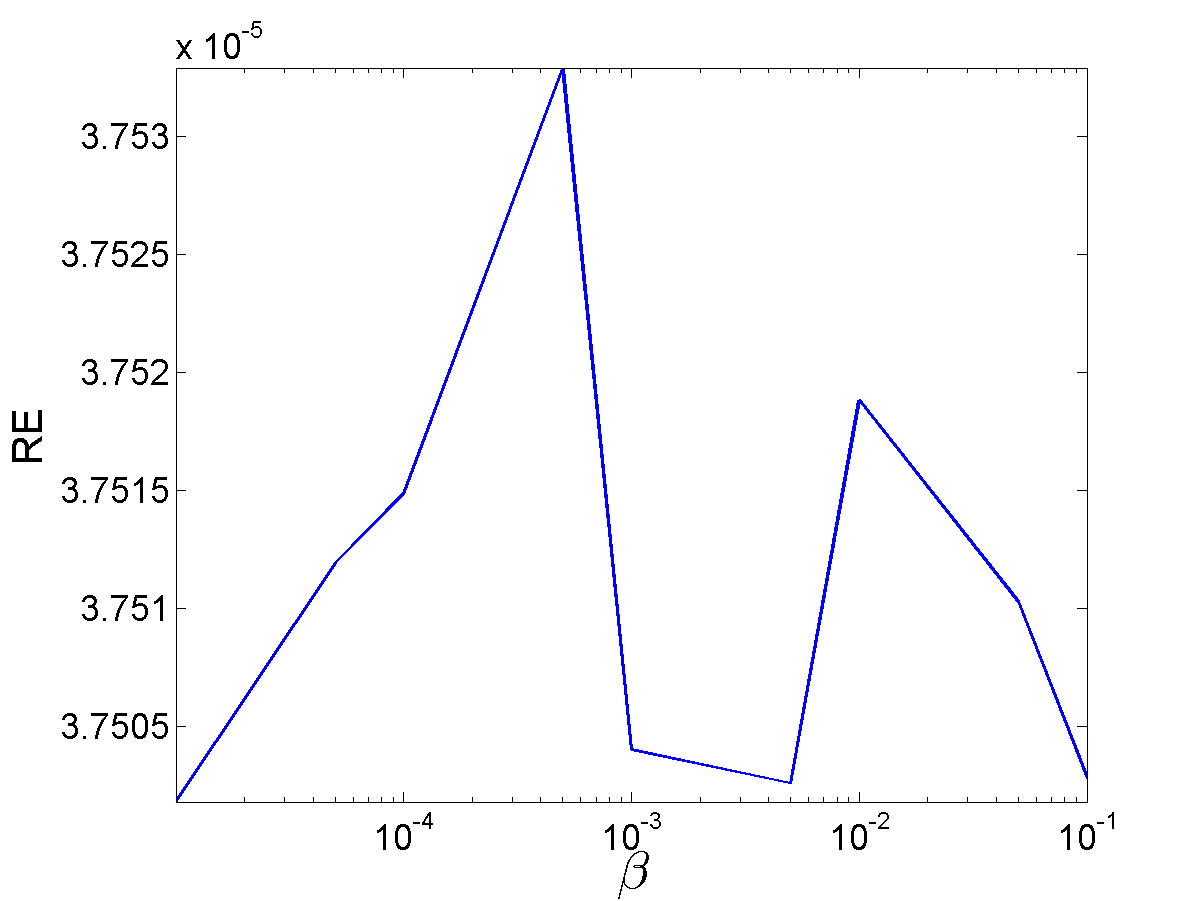}
\label{fig:re_M}}
\vspace{-0.3cm}
\\
\subfloat[3][]{
\includegraphics[keepaspectratio,height=0.2\textheight , width=0.2\textwidth]{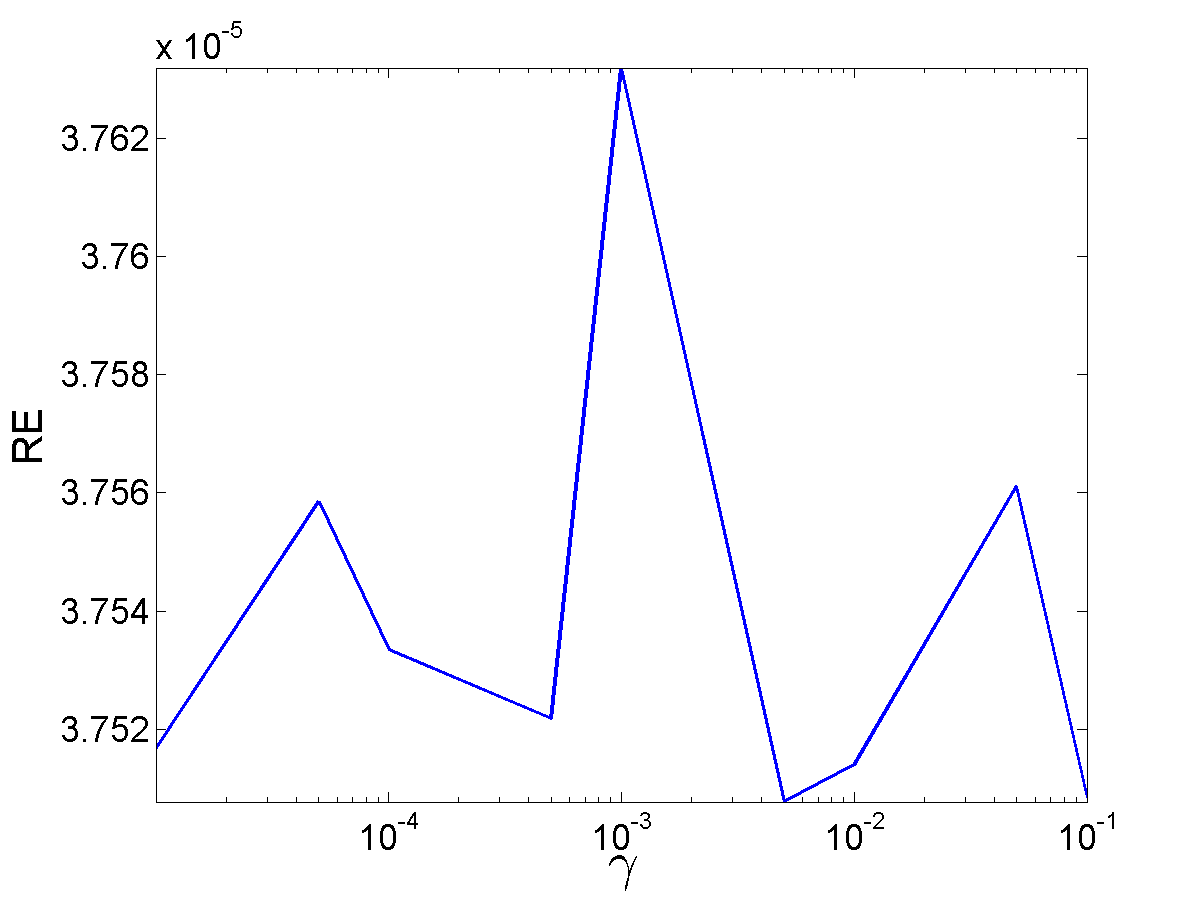}
\label{fig:re_dM}}
\subfloat[4][]{
\includegraphics[keepaspectratio,height=0.2\textheight , width=0.2\textwidth]{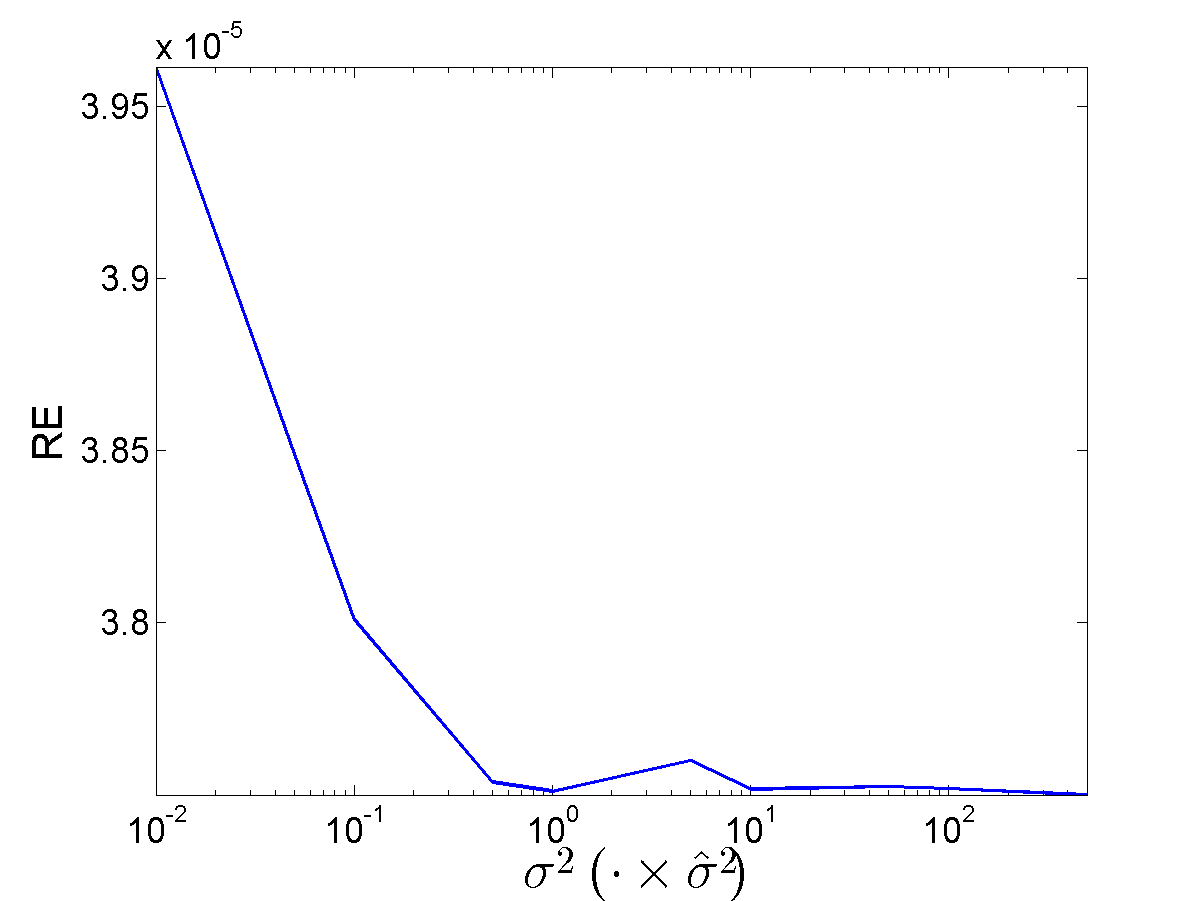}
\label{fig:re_sigma2}}
\vspace{-0.3cm}
\\
\subfloat[5][]{
\includegraphics[keepaspectratio,height=0.2\textheight , width=0.2\textwidth]{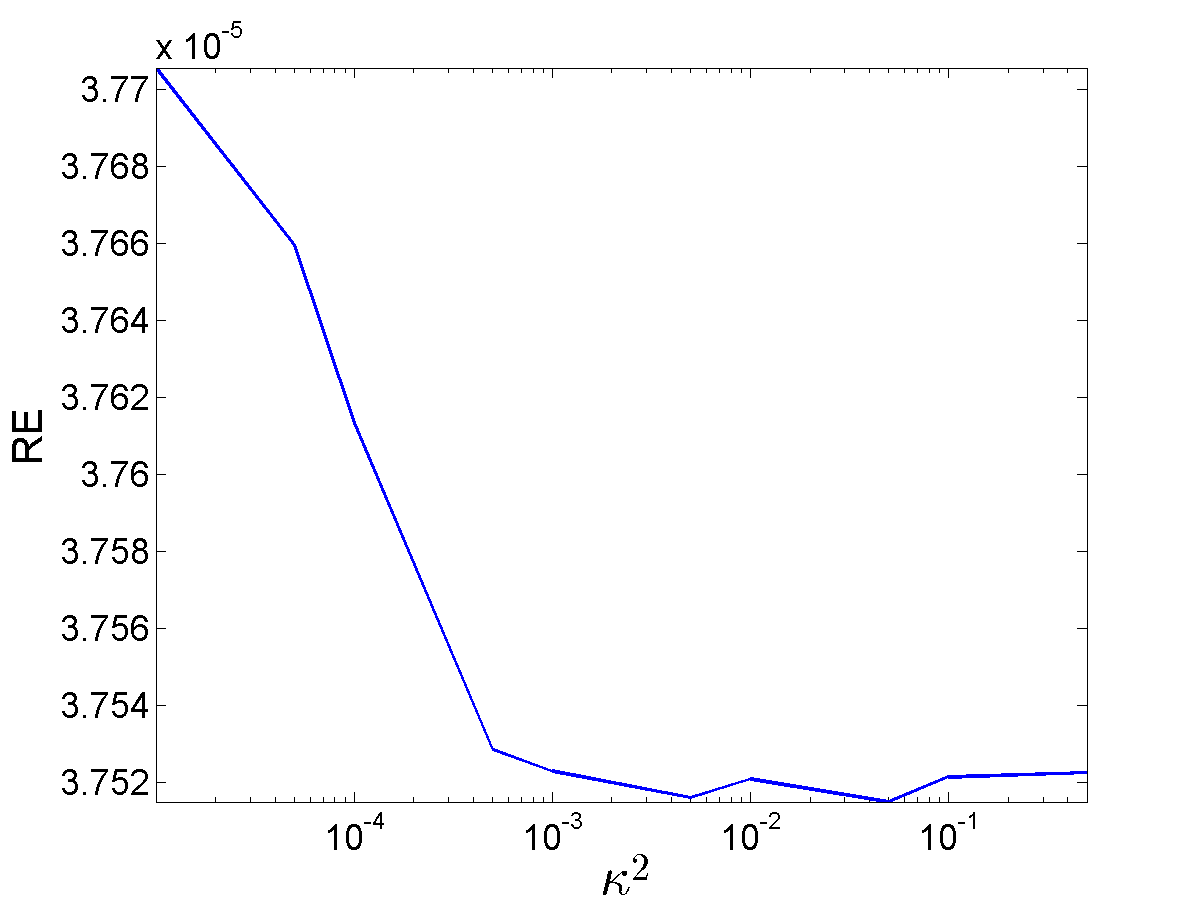}
\label{fig:re_kappa2}}
\vspace{-0.25cm}
\caption{Sensitivity analysis of the reconstruction error RE \wrt the tuning of the algorithm hyper-parameters ($\hat{\sigma}^2 = 0.0372$ denotes the theoretical average energy of the variability introduced in the synthetic dataset used for this analysis).}
\label{fig:re_var} \vspace{-0.5cm}
\end{figure}

\section{Experiment with real data} \label{sec:experiments}

	\subsection{Description of the dataset}

The proposed algorithm has been applied to real HS images acquired by the Airborne Visible Infrared Imaging Spectrometer (AVIRIS) over the Lake Tahoe region (California, United States of America) between 2014 and 2015\footnote{The images used in this experiment are freely available from the online AVIRIS flight locator tool at \url{http://aviris.jpl.nasa.gov/alt_locator/}.}. Water absorption bands were removed from the 224 spectral bands, leading to 173 exploitable bands. In absence of any ground truth, the sub-scene of interest ($150 \times 110$), partly composed of a lake and a nearby field, has been unmixed with $\nendm = 3$, 4 and 5 endmembers to obtain a compromise between the results of HySime \cite{Bioucas2008}, those of the recently proposed eigen-gap approach (EGA) \cite{Halimi2016} (see Table \ref{tab:results_r}), and the consistency of the resulting abundance maps. The parameters used for the proposed approach are given in Table \ref{tab:param}, and the other methods have been run with the same parameters as in Section \ref{sec:experiments}. Note that a $4 \times 4$ patch composed of outliers has been manually removed from the last image of the sequence prior to the unmixing procedure. 
%

\begin{figure*}[tbhp!]
\centering
\subfloat[1][04/10/2014]{
\includegraphics[keepaspectratio,height=0.15\textheight , width=0.15\textwidth]{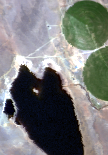}
\label{fig:cube1}}
\quad
\subfloat[2][06/02/2014]{
\includegraphics[keepaspectratio,height=0.15\textheight , width=0.15\textwidth]{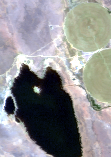}
\label{fig:cube2}}
\quad
\subfloat[3][09/19/2014]{
\includegraphics[keepaspectratio,height=0.15\textheight , width=0.15\textwidth]{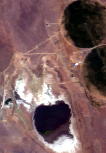}
\label{fig:cube3}}
\quad
\subfloat[4][11/17/2014]{
\includegraphics[keepaspectratio,height=0.15\textheight , width=0.15\textwidth]{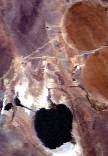}
\label{fig:cube4}}
\quad
\subfloat[5][04/29/2015]{
\includegraphics[keepaspectratio,height=0.15\textheight , width=0.15\textwidth]{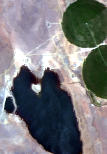}
\label{fig:cube5}}
\vspace{-0.1cm}
\caption{Scenes used in the experiment, given with their respective acquisition date.}
\label{fig:cube} \vspace{-0.7cm}
\end{figure*}

\begin{figure}[t]
\centering
\includegraphics[keepaspectratio,height=0.2\textheight , width=0.5\textwidth]{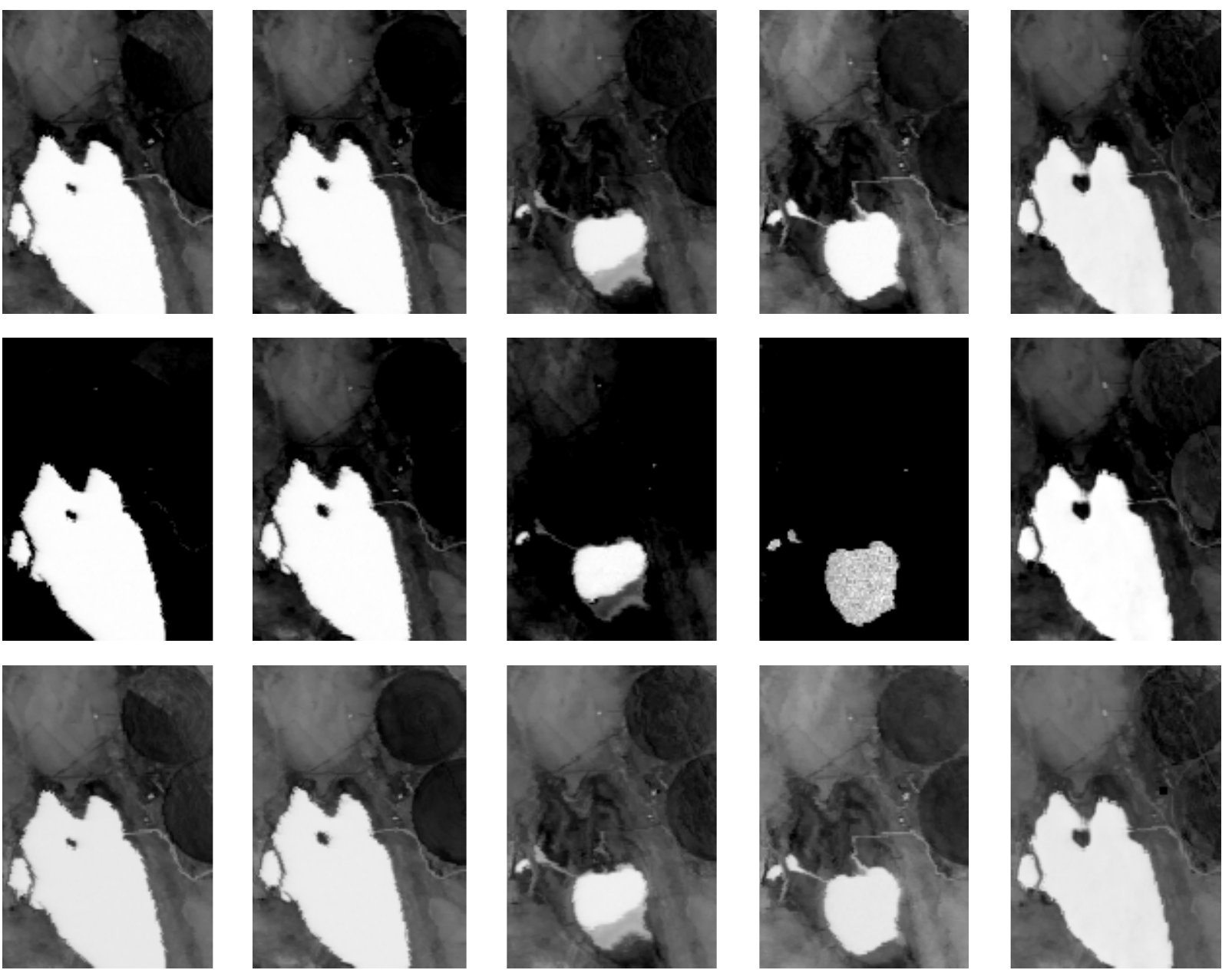}
\caption{Water abundance maps (proposed method on the first line, VCA/FCLS on the second line, SISAL/FCLS on the third line).}
\label{fig:abundance1} \vspace{-0.5cm}
\end{figure}

\begin{figure}[th]
\centering
\includegraphics[keepaspectratio,height=0.2\textheight , width=0.5\textwidth]{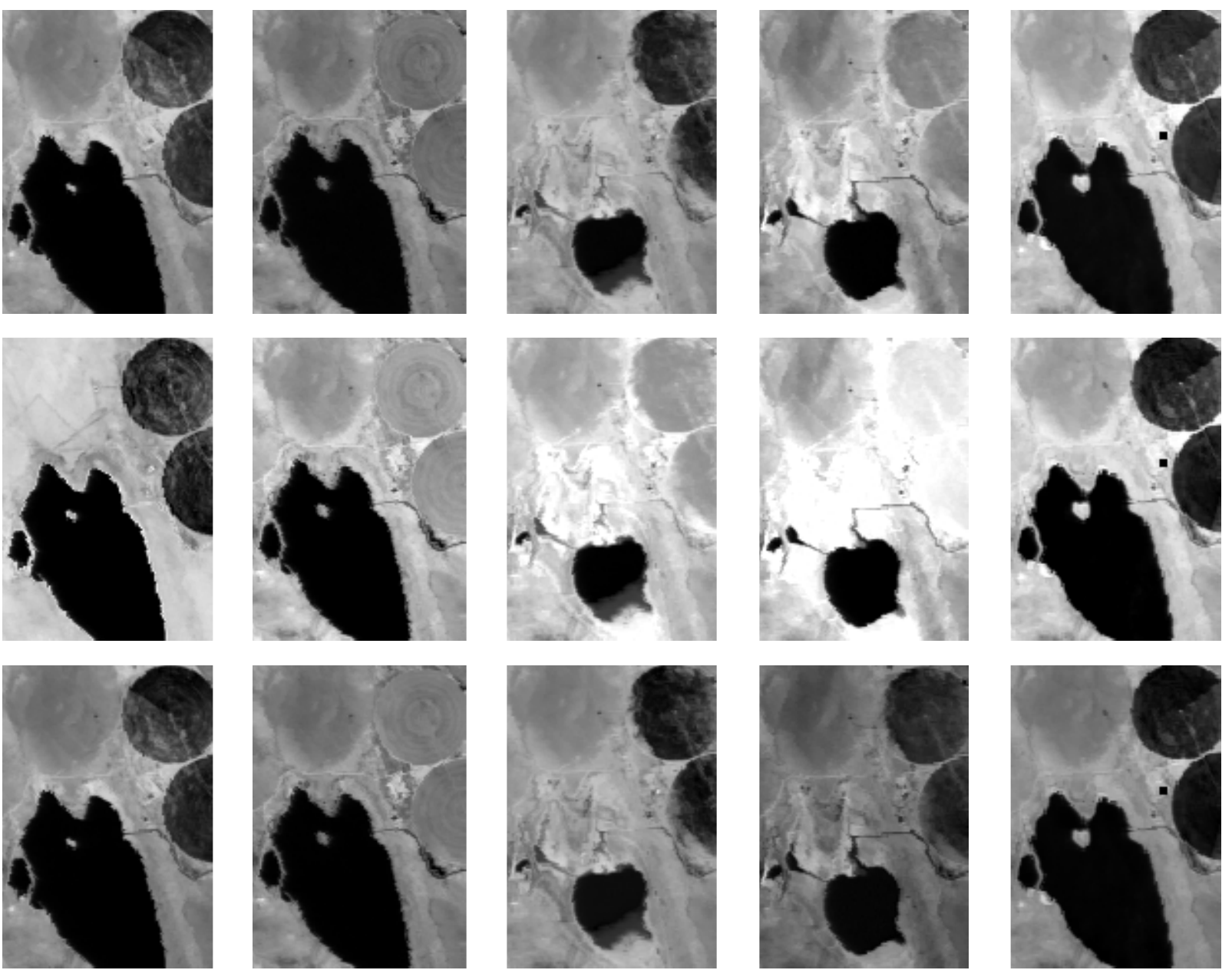}
\caption{Soil abundance maps (proposed method on the first line, VCA/FCLS on the second line, SISAL/FCLS on the third line).}
\label{fig:abundance2} \vspace{-0.5cm}
\end{figure}

\begin{figure}[t]
\centering
\includegraphics[keepaspectratio,height=0.2\textheight , width=0.5\textwidth]{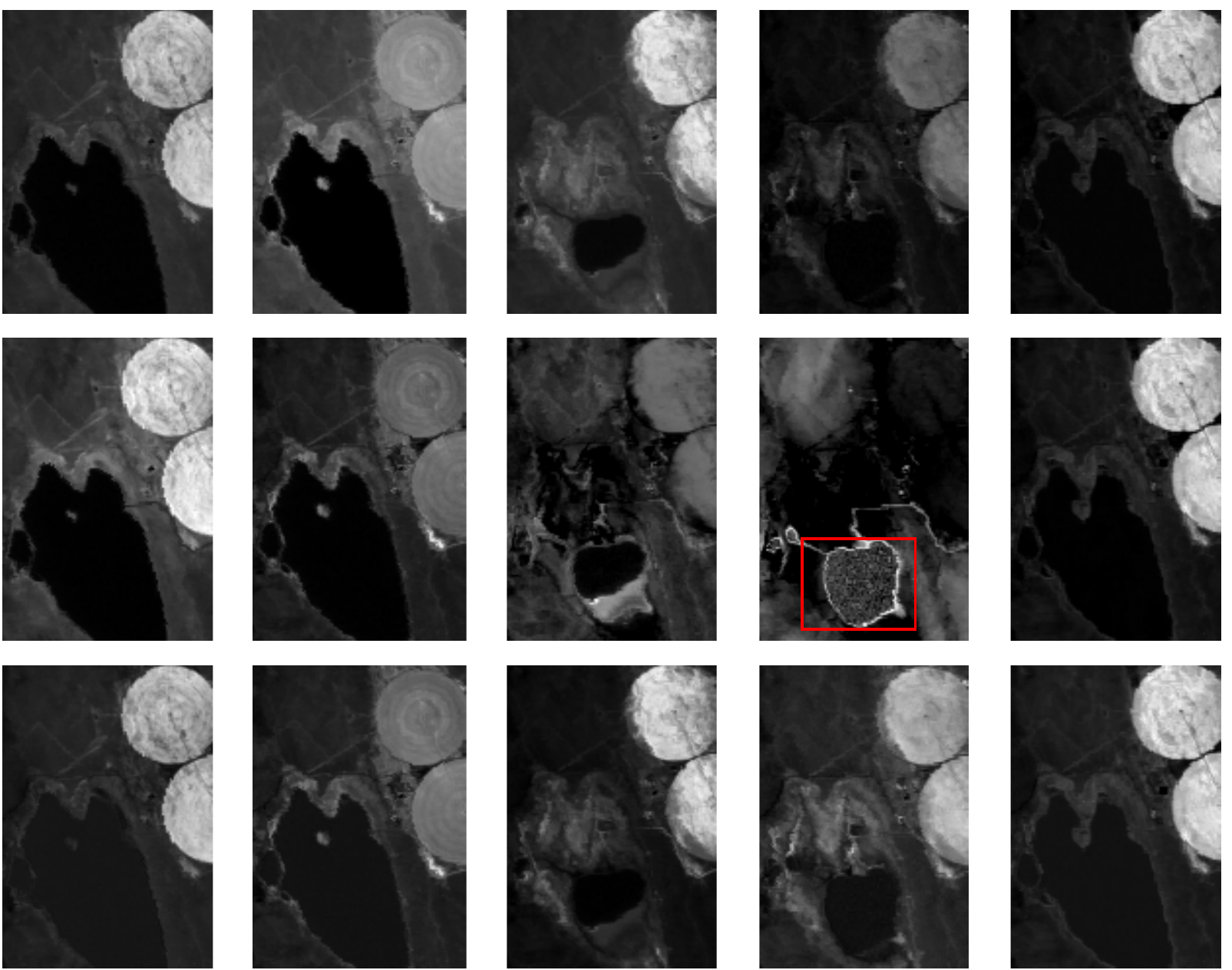}
\caption{Vegetation abundance maps (proposed method on the first line, VCA/FCLS on the second line, SISAL/FCLS on the third line). The region delineated in red, where almost no vegetation is supposed to be present, reveals that the water endmember extracted by VCA has been split into two parts. This observation is further confirmed in Figs. \ref{fig:vca_endm1} and \ref{fig:vca_endm3}.}
\label{fig:abundance3}
\end{figure}	

	\subsection{Results}
	Since no ground truth is available, the algorithm performance is evaluated in terms of the reconstruction error defined in \eqref{eq:RE}. Only the more consistent abundance maps and endmembers obtained for $\nendm = 3$ are presented in Figs. \ref{fig:abundance1} to \ref{fig:real_endm} due to space constraints. Complementary results are available in \cite{Thouvenin2015TR}. The proposed method provides comparable reconstruction errors (see Table \ref{tab:results_real}), yields more consistent abundance maps when compared to VCA/FCLS and SISAL/FCLS especially for the soil and the vegetation for a somewhat reasonable computational cost. In particular, note that the estimated vegetation abundance map of the fourth image depicted in Fig. \ref{fig:abundance3} (area delineated in red) presents significant errors when visually compared to the corresponding RGB image in Fig. \ref{fig:cube4}. These errors can be explained by the fact that the water endmember extracted by VCA has been split into two parts as can be seen in Figs. \ref{fig:vca_endm1} and \ref{fig:vca_endm3} (see signatures given in black). Indeed, the VCA algorithm cannot detect the scarcely present vegetation. On the contrary, the joint exploitation of multiple images enables the faint traces of dry vegetation to be captured. Albeit impacted by the results of VCA/FCLS (used as initialization), the performance of $\ell_{1/2}$ NMF and BCD/ADMM remains satisfactory on each image of the sequence since they tend to correct the endmember errors induced by VCA. However, $\ell_{1/2}$ NMF produces undesirable endmembers with an amplitude significantly greater than 1 on the 4th image (Fig. \ref{fig:cube4}). Besides, BCD/ADMM yields very low reconstruction errors at the price of a computational cost which may become prohibitive for extended image sequences. The figures related to $\ell_{1/2}$ NMF and BCD/ADMM are available in the associated report \cite{Thouvenin2015TR} due to space constraints.
	
Furthermore the instantaneous variability energy (computed as $\norm2{\mathbf{dm}_{rt}}/ \nband$ for $r=1,\dotsc,\nendm$ and $t=1\dotsc,\ntime$) can reveal which endmember deviates the most from its average spectral behavior. In this experiment, the soil and the vegetation signatures -- which seem to vary the most over time (see Fig. \ref{fig:cube}) -- are found by the proposed method to be affected by the most significant variability level (see Table \ref{tab:results_var_norm}). In this experiment, a significant increase can be observed in the endmember variability energy over the last three images of the sequence (see Table \ref{tab:results_var_norm}), suggesting that the endmembers are apparently better represented in the two first images of the sequence (see Fig. \ref{fig:cube}). This observation suggests the proposed method captures the average endmember spectral behavior and enables the time at which the greatest spectral changes occur to be identified. However, a detailed analysis of this observation is out of the scope of the present paper.	

\begin{figure}[t]
\centering
\subfloat[4][Water]{
\includegraphics[keepaspectratio,height=0.15\textheight , width=0.15\textwidth]{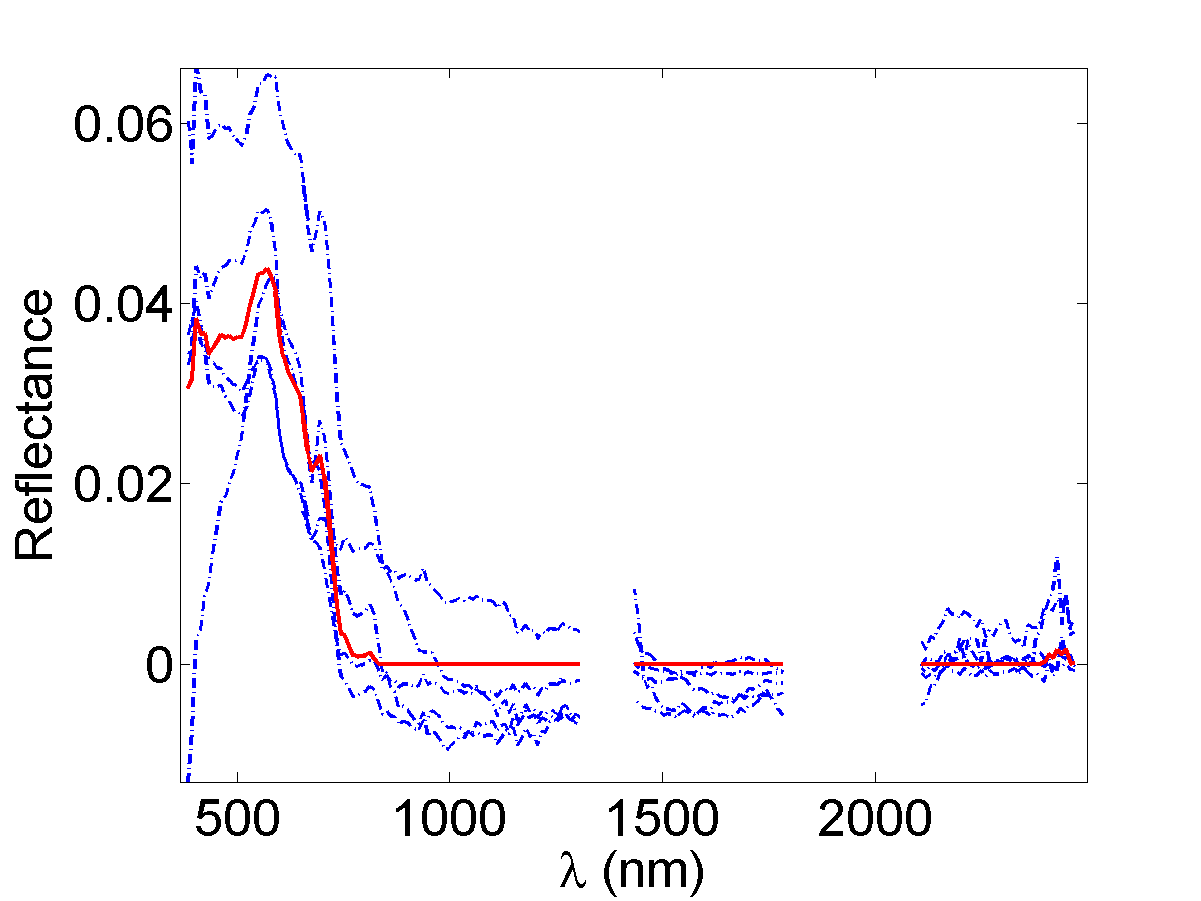}
\label{fig:endm1}}
\subfloat[5][Soil]{
\includegraphics[keepaspectratio,height=0.15\textheight , width=0.15\textwidth]{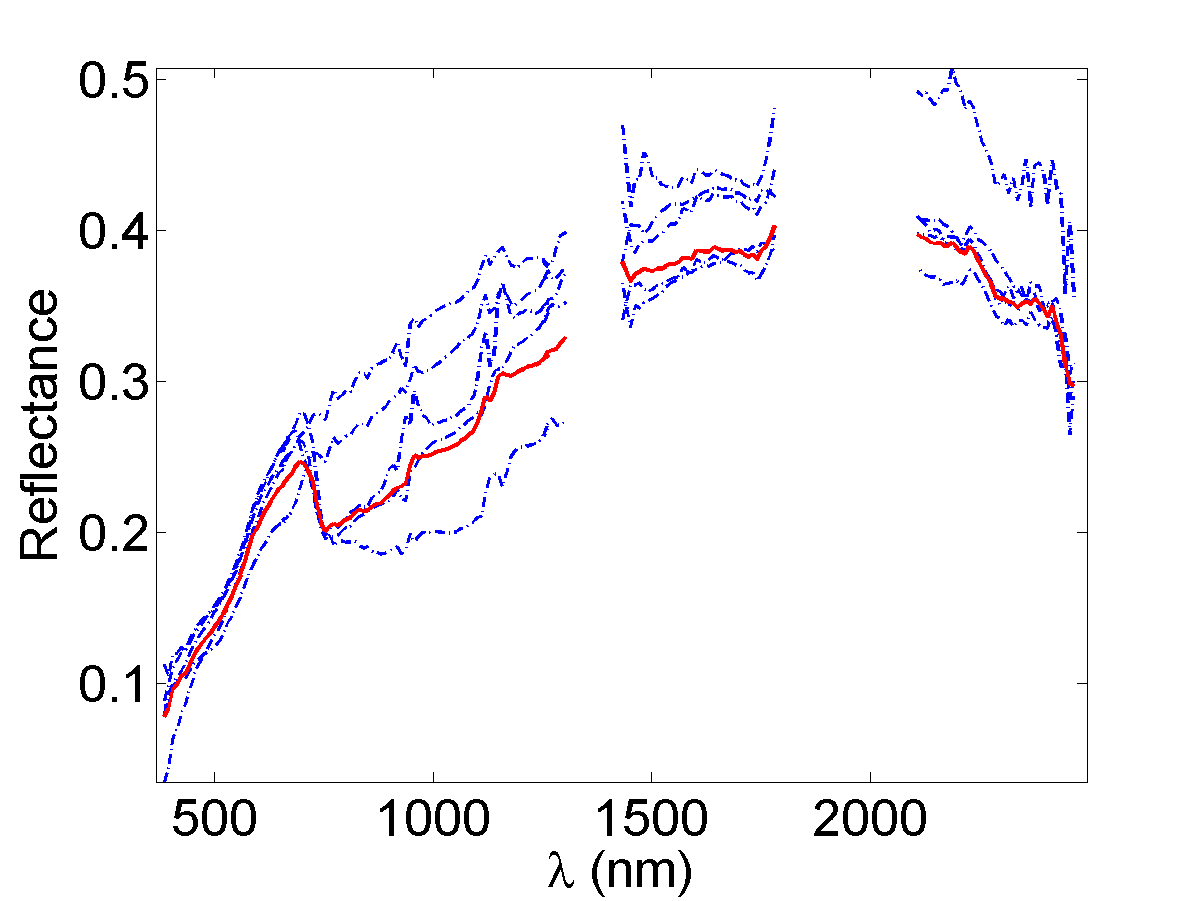}
\label{fig:endm2}}
\subfloat[6][Vegetation]{
\includegraphics[keepaspectratio,height=0.15\textheight , width=0.15\textwidth]{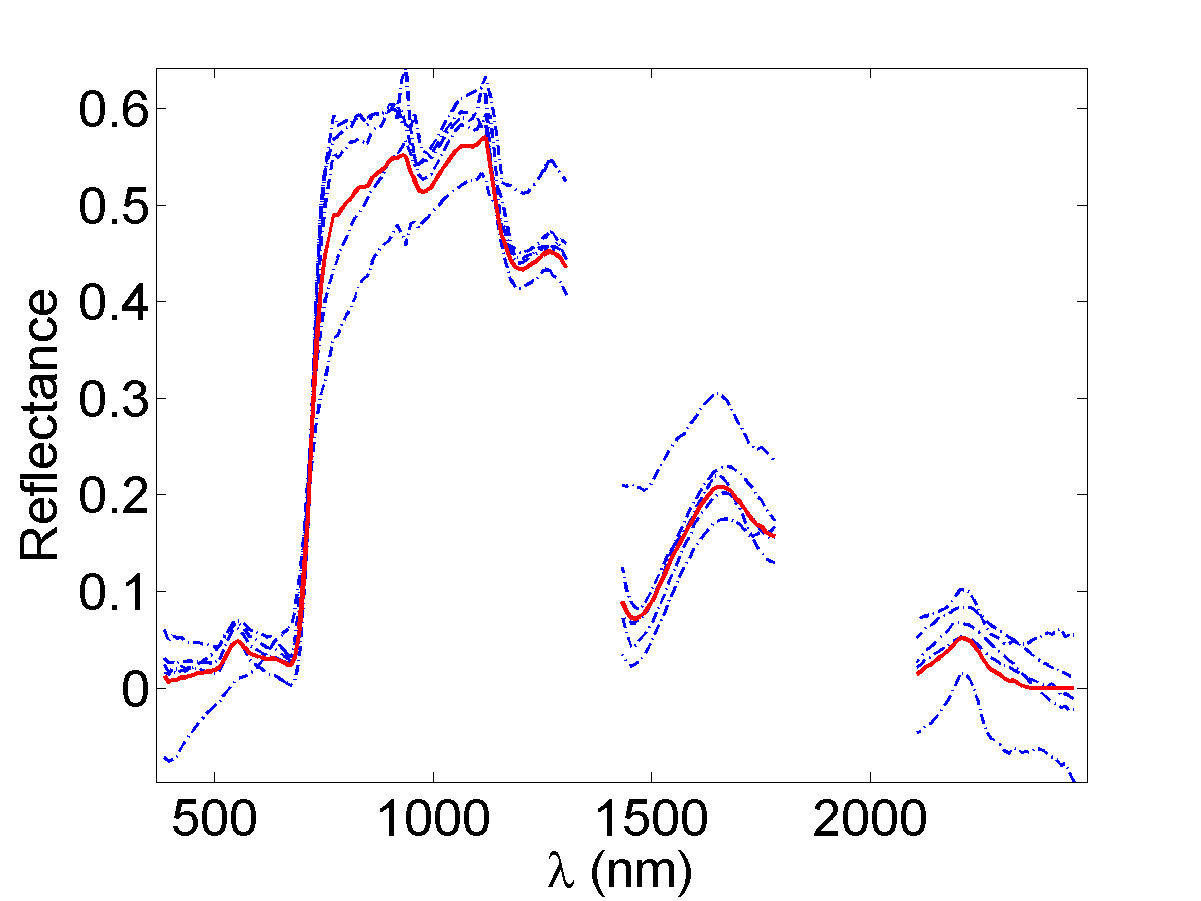}
\label{fig:endm3}}
\vspace{-0.25cm}
\\
\subfloat[4][Water (VCA)]{
\includegraphics[keepaspectratio,height=0.15\textheight , width=0.15\textwidth]{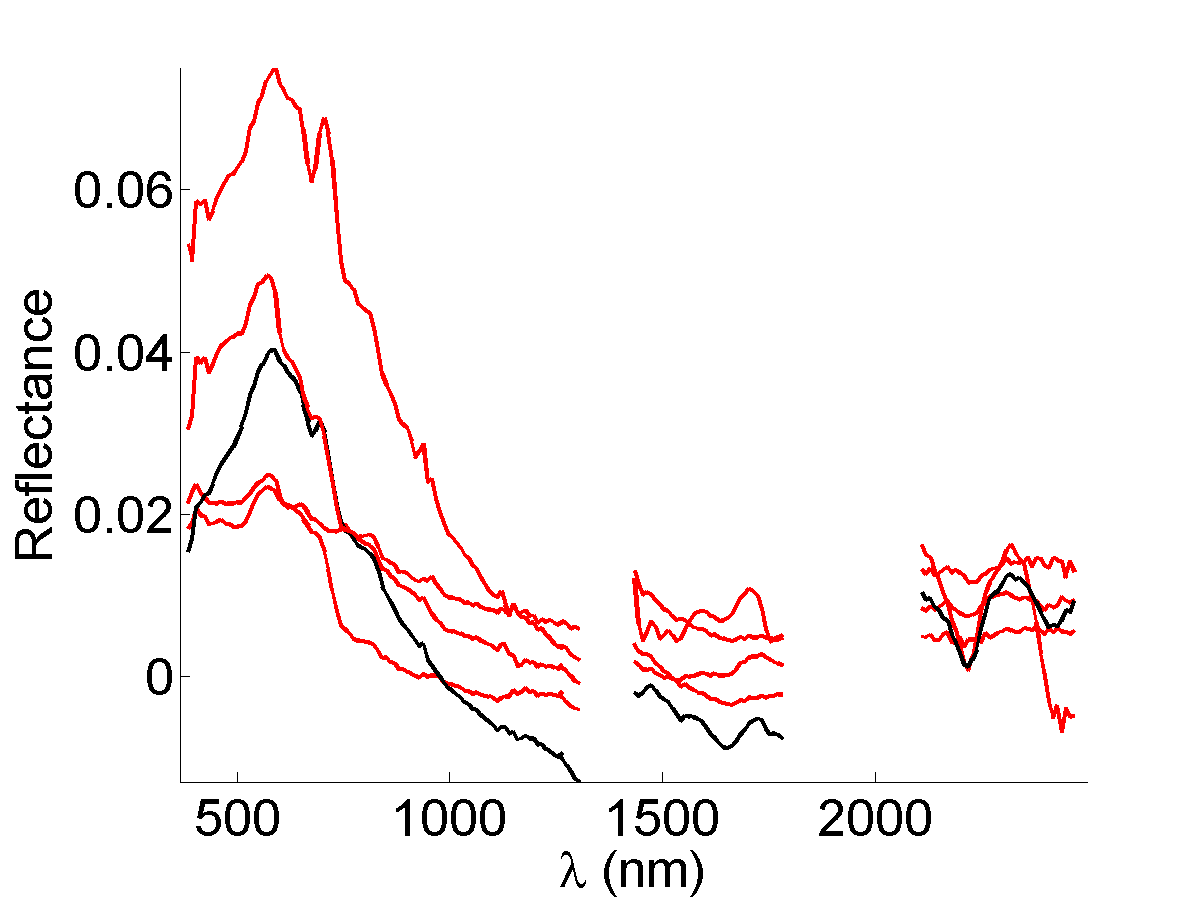}
\label{fig:vca_endm1}}
\subfloat[5][Soil (VCA)]{
\includegraphics[keepaspectratio,height=0.15\textheight , width=0.15\textwidth]{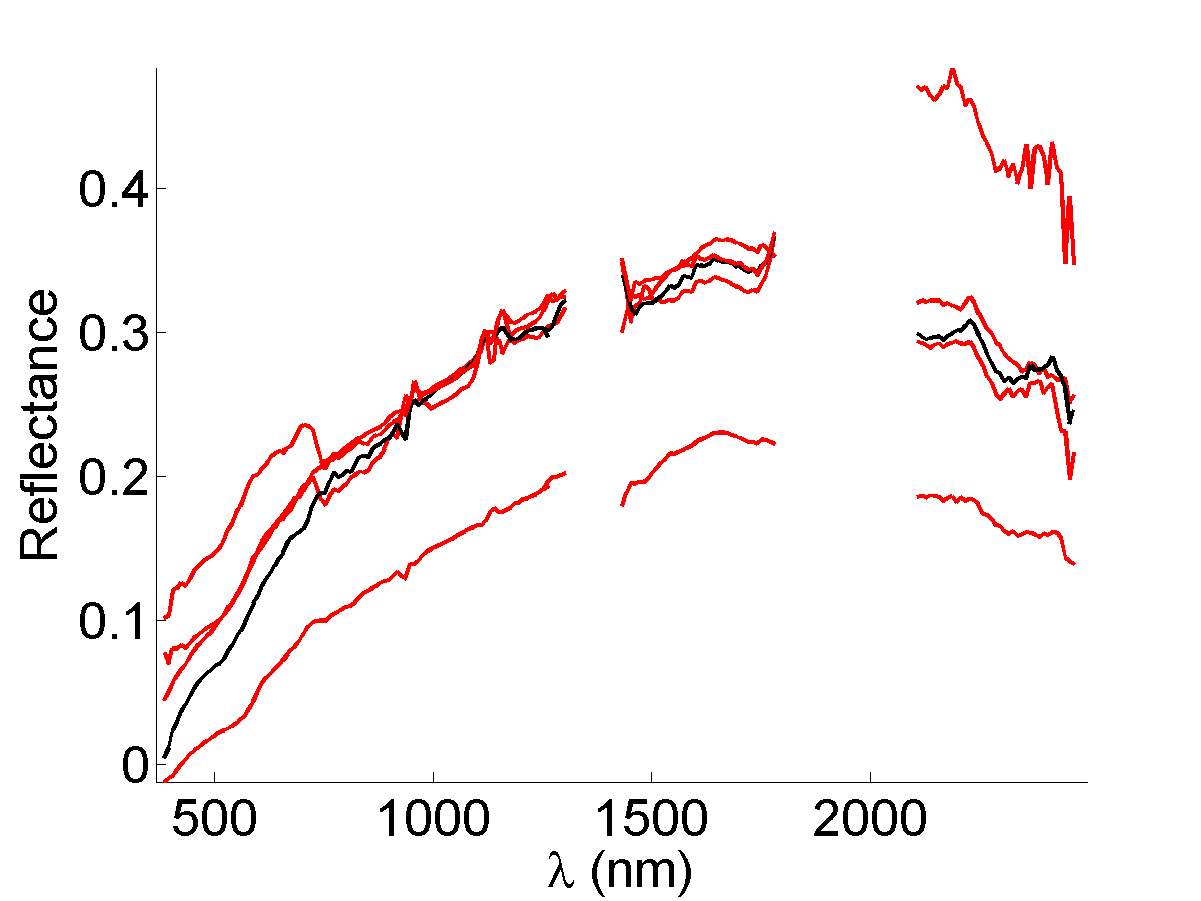}
\label{fig:vca_endm2}}
\subfloat[6][Vegetation (VCA)]{
\includegraphics[keepaspectratio,height=0.15\textheight , width=0.15\textwidth]{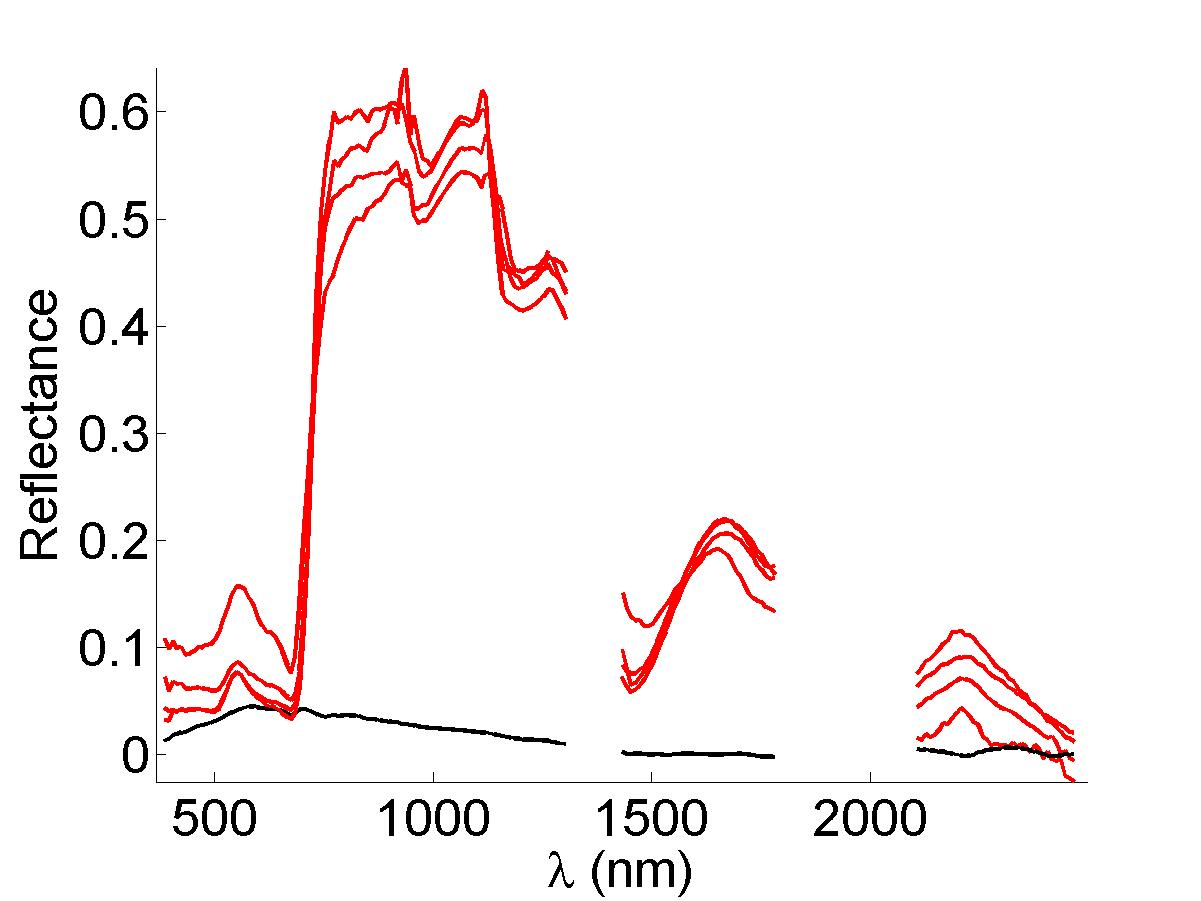}
\label{fig:vca_endm3}}
\\
\subfloat[7][Water (SISAL)]{
\includegraphics[keepaspectratio,height=0.15\textheight , width=0.15\textwidth]{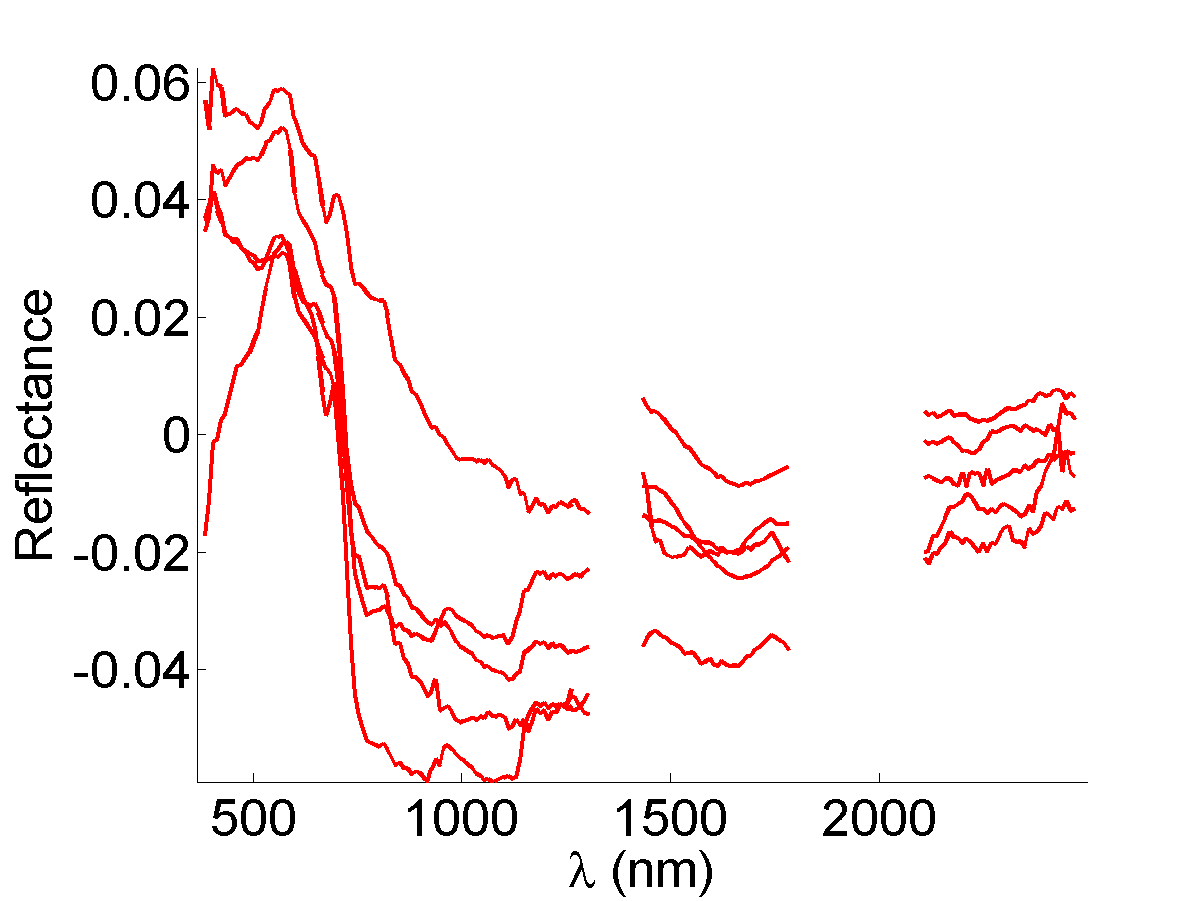}
\label{fig:sisal_endm1}}
\subfloat[8][Soil (SISAL)]{
\includegraphics[keepaspectratio,height=0.15\textheight , width=0.15\textwidth]{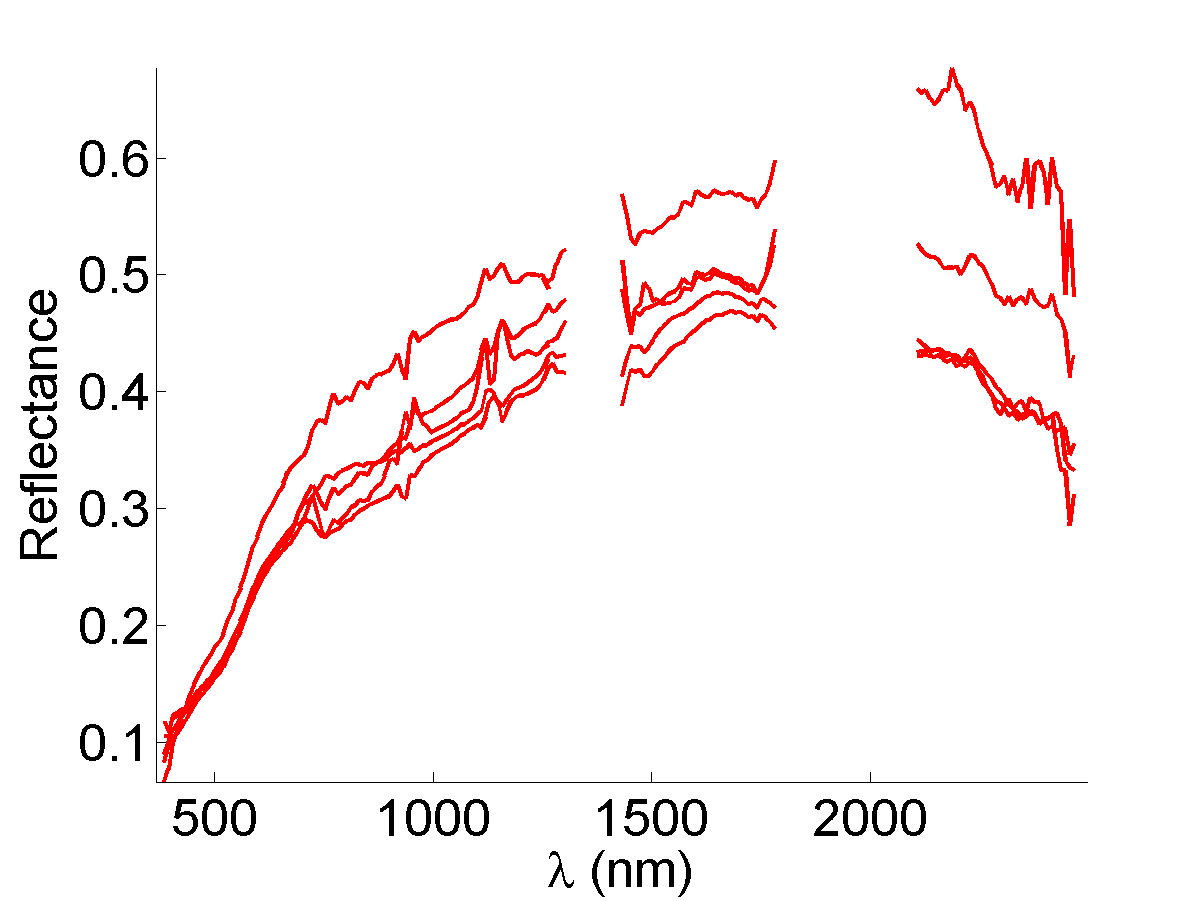}
\label{fig:sisal_endm2}}
\subfloat[9][Vegetation (SISAL)]{
\includegraphics[keepaspectratio,height=0.15\textheight , width=0.15\textwidth]{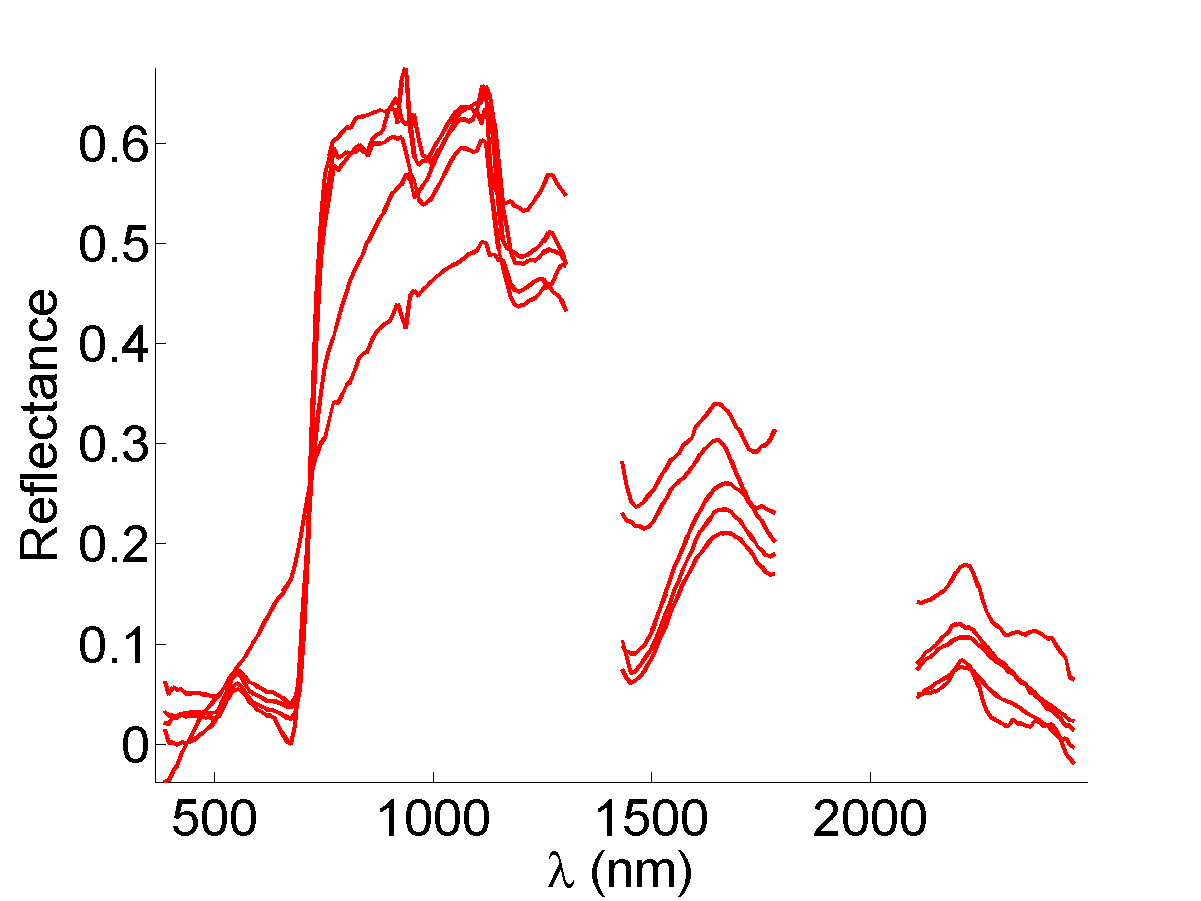}
\label{fig:sisal_endm3}}
\caption{Endmembers and variability (endmembers in red lines, variability in blue dashed lines) recovered by the proposed method on the first line, VCA-extracted endmembers on the second line, SISAL-extracted endmembers on the third line). The endmembers given in black on the second line correspond to the endmembers identified by VCA on the image \ref{fig:cube4}, where the water endmember has been split into two parts (see Figs. \ref{fig:vca_endm1} and \ref{fig:vca_endm3}).}
\label{fig:real_endm}
\end{figure}
\begin{table}[t]
\vspace{-0.5cm}
\caption{Endmember number $\nendm$ estimated on each image of the real dataset by HySime \cite{Bioucas2008} and EGA \cite{Halimi2016}.}
	\begin{center}
		\begin{tabular}{@{}lccccc@{}} \toprule
			& \rotatebox{45}{04/10/2014} & \rotatebox{45}{06/02/2014} & \rotatebox{45}{09/19/2014} & \rotatebox{45}{11/17/2014} & \rotatebox{45}{04/29/2015}  \\ \cmidrule{2-6}
HySime \cite{Bioucas2008}	& 16 & 21 & 19 & 21 & 22 \\
EGA \cite{Halimi2016}       & 3 & 5 & 4 & 3 & 3 \\ \bottomrule	
		\end{tabular}
	\end{center}
\label{tab:results_r} \vspace{-0.3cm}
\end{table}
\setlength\columnsep{0.1pt}
\begin{table}[!t] 
\vspace{-0.3cm}
\caption{Simulation results on real data (RE $\times 10^{-4}$).}
	\begin{center}
		\begin{tabular}{@{}llcc@{}} \toprule
&		   	   & RE & time (s) \\ \cmidrule{1-4}
\multirow{5}{*}{\rotatebox{90}{$\nendm = 3$}}
&VCA/FCLS      &  12.7 & \textbf{2} \\
&SISAL/FCLS	   &  0.87 & 3 	\\
&$\ell_{1/2}$ NMF & 3.83 & 156 \\
&BCD/ADMM	  & \textbf{0.37} & 2449 \\
&Proposed 	  & 1.04 & 134 \\ \cmidrule{1-4}
\multirow{5}{*}{\rotatebox{90}{$\nendm = 4$}}
&VCA/FCLS      &  43.8 & \textbf{2} \\
&SISAL/FCLS	   &  0.35 & 3 	\\
&$\ell_{1/2}$ NMF & 16.0 & 163 \\
&BCD/ADMM	      & \textbf{0.27} & 4396 \\
&Proposed 	  & 0.76 & 126 \\ \cmidrule{1-4}
\multirow{5}{*}{\rotatebox{90}{$\nendm = 5$}}
&VCA/FCLS      &  63.9 & \textbf{2} \\
&SISAL/FCLS	   &  0.17 & 4 	\\
&$\ell_{1/2}$ NMF & 14.6 & 174 \\
&BCD/ADMM	      & \textbf{0.098} & 12511 \\
&Proposed 	  & 0.17 & 128 \\
    \bottomrule
		\end{tabular}
	\end{center}
\label{tab:results_real} \vspace{-0.5cm}
\end{table}
\begin{table}[t]
\vspace{-0.5cm}
\caption{Experiment with real data for $\nendm = 3$: energy of the variability captured for each endmember at each time instant ($\norm2{\mathbf{dm}_{kt}}/\nband \times 10^{-5}$ for $k = 1,\dotsc,\nendm$, $t = 1,\dotsc ,\ntime$).}
	\begin{center}
		\begin{tabular}{@{}lccc@{}} \toprule
			& Water	& Vegetation & Soil \\ \cmidrule{2-4}
04/10/2014	& 1.22	& 9.68		    & \textbf{11.51} \\
06/02/2014	& 1.44	& 11.85         & \textbf{38.37} \\
09/19/2014	& 7.29	& \textbf{11.41}& 9.30 \\
11/17/2014	& 2.77	& \textbf{21.73}& 16.55 \\
04/29/2015	& 0.58	& \textbf{106.03}& 26.19 \\ \bottomrule
		\end{tabular}
	\end{center}
\label{tab:results_var_norm} \vspace{-0.8cm}
\end{table}

\section{Conclusion and future work} \label{sec:conclusion}
This paper introduced an online hyperspectral unmixing procedure accounting for endmember temporal variability based on the perturbed linear model considered in \cite{Thouvenin2015}. This algorithm was designed to unmix multiple HS images of moderate size, potentially affected by smoothly varying endmember perturbations. Indeed, the number of spurious local optima of the cost function used in this paper can significantly increase with the size of the images and the number of endmembers considered, which is a problem common to many blind source separation problems (such as the unmixing problem addressed in this paper). The underlying unmixing problem was formulated as a two-stage stochastic program solved by a stochastic approximation algorithm. Simulations conducted on synthetic and real data enabled the interest of the proposed approach to be appreciated. Indeed, the proposed method compared favorably with established approaches performed independently on each image of the sequence while providing a relevant variability estimation. Assessing the robustness of the proposed technique with respect to estimation errors on the endmember number $\nendm$ and applying the proposed method to real dataset composed of a larger number of endmembers are interesting prospects for future work. Possible perspectives also include the extension of the method to account for spatial variability and applications to change detection problems. A distributed unmixing procedure is also under investigation to solve the resulting high dimensional problem.

\begin{appendices}
\section{Projections involved in the parameter updates} \label{sec:proj}
The projections involved in the PALM algorithm \cite{Bolte2013} described in Algo. \ref{alg:update_A_dM} are properly defined, since the associated constraint spaces are closed convex sets. More precisely,
\begin{itemize}
\item $\D_t$ is closed and convex as the (non-empty) intersection of two closed balls. The projection onto $\D_t$ can be approximated by the Dykstra algorithm \cite{Boyle1986,Heylen2013}. Besides, the projection on a Frobenius ball is given by \cite{Parikh2014} \vspace{-0.15cm}
	\begin{equation}
	\mathcal{P}_{\Ball{\mathbf{X}}{r}} (\Y) = \mathbf{X} + \min \left(1, \frac{r}{\normf{\Y - \mathbf{X}}} \right) (\Y - \mathbf{X});\vspace{-0.2cm}
	\end{equation}
\item projecting $\M$ onto $\R{\nband}{\nendm}_+$ is explicitly given by \vspace{-0.2cm}
\begin{equation}
\mathcal{P}_+ (\M) = \max(\Zero{\nband}{\nendm}, \M) \vspace{-0.2cm}
\end{equation}
where the $\max$ is taken term-wise.
\end{itemize}

\section{Discussion on Assumption \ref{hyp:hessian}} \label{sec:hessian}
The Hessian matrix of $f(\Y,\M,\cdot,\cdot)$, denoted by $\Hf$, is given by \vspace{-0.2cm}
\begin{align}
&\Hf = \begin{bmatrix}
\mathbf{H}_1 & \mathbf{H}_2 \\
\mathbf{H}_3 & \mathbf{H}_4
\end{bmatrix} \\
&\Mt = (\M+\dM) \\
&\mathbf{H}_1 = \I_\nbpix \otimes (\t{\Mt}\Mt), \quad \mathbf{H}_4 = (\A \t{\A}) \otimes \I_\nband \\
&\mathbf{H}_3 = \t{\mathbf{H}_2} = \biggl\{ \I_\nendm \otimes \bigl[-\Y + \Mt\A \bigr] \biggr\}\shuffle{\nendm,\nbpix} + [\A \otimes \Mt]. \vspace{-0.2cm}
\end{align}
where $\shuffle{\nendm,\nband}$ is the perfect shuffle matrix. The block matrix $\Hf$ is invertible if, for instance, $\mathbf{H}_1$ and its Schur complement $\mathbf{S} = \mathbf{H}_4 - \mathbf{H}_3 \mathbf{H}_1^{-1}\mathbf{H}_2$ are invertible. In practice, $\mathbf{H}_1$ is generally invertible since $\M + \dM$ is full column rank. The invertibility of the Schur complement $\mathbf{S}$ can be ensured via an appropriate regularization term $\frac{\mu}{2} \normF2{\A}$ added to the original objective $f$. Indeed, we first note that such a perturbation regularizes the Hessian by modifying its diagonal block $\mathbf{H}_4$, replaced by $\mathbf{H}_4 + \mu \mathbf{I}$.

Denote by $\lambda_1 > \lambda_2 > \dotsc > \lambda_r$ the ordered eigenvalues of $\mathbf{S}$, where $r$ denotes the number of distinct eigenvalues. By the spectral theorem, there exists an orthogonal matrix (\wrt the canonical euclidean inner product) $\mathbf{Q}$ such that $\mathbf{S} = \t{\mathbf{Q}}\mathbf{DQ}$, where $\mathbf{D}$ is a diagonal matrix composed of the $\lambda_k$. Note that each eigenvalue may have a multiplicity order greater than 1 with the adopted notations. If there exits $k$ such that $\lambda_k = 0$, then $\lambda_{k+1} < 0$. Adding $\frac{\mu}{2} \normF2{\A}$ to the original objective function, with $\mu < |\lambda_{k+1}|$, is then sufficient to ensure the invertibility of the Schur complement \vspace{-0.2cm}
\begin{equation*}
(\mathbf{H}_4 - \mathbf{H}_3 \mathbf{H}_1^{-1}\mathbf{H}_2) + \mu \mathbf{I} = \t{\mathbf{Q}}\mathbf{DQ} + \mu \mathbf{I} =  \t{\mathbf{Q}}(\mathbf{D}+\mu \mathbf{I})\mathbf{Q} \vspace{-0.2cm}
\end{equation*} 
associated to the new Hessian matrix, thus ensuring its invertibility.

\section{Convergence proof} \label{sec:proof}

Largely adapted from \cite{Mairal2010}, the following sketch of proof reduces to an adaptation of \cite[Lemma 1, Proposition 1]{Mairal2010}. From this point, our problem exactly satisfies the assumptions required to apply the same arguments as in \cite[Proposition 2, Proposition 3]{Mairal2010}, leading to the announced convergence result.

\begin{lemma}[Asymptotic variations of $\M_t$ \cite{Mairal2010}]
Under Assumptions \ref{hyp:gt} and \ref{hyp:Lipschitz}, we have \vspace{-0.2cm}
\begin{equation}
\normf{\M^{(t+1)} - \M^{(t)}} = O\left( \frac{1}{t} \right) \; \text{almost surely} \; (a.s.). \vspace{-0.2cm}
\end{equation}
\end{lemma}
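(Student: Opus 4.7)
The plan is to mimic the classical argument of Mairal et al.\ for online matrix factorization, exploiting the strong convexity of the surrogate $\gt$ provided by Assumption \ref{hyp:gt} together with the compactness of the feasible sets. The starting point is the observation that $\M^{(t)}$ and $\M^{(t+1)}$ are, by construction (line \ref{PALM_M} of Algo.~\ref{alg:SAA}), the unique minimizers of $\gt$ and $\hat{g}_{t+1}$ respectively on the convex compact set $\mathcal{M}$. The asymptotic variation of $\M^{(t)}$ can therefore be controlled by the variation of the surrogate itself between two consecutive iterations.

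First, I would expand the increment. From the definition of $\gt$ in \eqref{eq:g_hat}, a direct computation yields
\begin{equation*}
\hat{g}_{t+1}(\M) - \gt(\M) \;=\; \frac{1}{t+1}\bigl[ \ell_{t+1}(\M) + \beta\Psi(\M) - \gt(\M) \bigr],
\end{equation*}
where $\ell_{i}(\M) = \tfrac{1}{2}\normF2{\Y_i - (\M+\dM_i)\A_i}$. By the compactness of $\mathcal{Y}$, $\mathcal{M}$, $\Ak$ and of the variability domain $\mathcal{D}_t \subset \Bs$ (see the Remark in Section \ref{sec:PLMM}), the quantities $\Y_i$, $\M$, $\A_i$, $\dM_i$ all live in bounded sets. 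Hence $\ell_{t+1}$ and (by an averaging argument) $\gt$ are gradient-Lipschitz on $\mathcal{M}$ with a constant that does \emph{not} depend on $t$, and combined with the Lipschitz property of $\Psi$ from Assumption \ref{hyp:Lipschitz}, the function $h_t \defequal \hat{g}_{t+1} - \gt$ is Lipschitz on $\mathcal{M}$ with constant $L_h/(t+1)$ for some deterministic constant $L_h$.

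Second, I would invoke strong convexity. Assumption \ref{hyp:gt} implies that both $\gt$ and $\hat{g}_{t+1}$ are $\mu_\M$-strongly convex. Writing the two first-order optimality inequalities at the respective minimizers and adding them gives
\begin{equation*}
\bigl[h_t(\M^{(t)}) - h_t(\M^{(t+1)})\bigr] \;\geq\; \mu_\M \normF2{\M^{(t+1)} - \M^{(t)}}.
\end{equation*}
Upper-bounding the left-hand side by the Lipschitz property of $h_t$ derived above yields
\begin{equation*}
\frac{L_h}{t+1}\,\normf{\M^{(t+1)} - \M^{(t)}} \;\geq\; \mu_\M \normF2{\M^{(t+1)} - \M^{(t)}},
\end{equation*}
which, after division by $\normf{\M^{(t+1)} - \M^{(t)}}$, immediately produces $\normf{\M^{(t+1)} - \M^{(t)}} \leq L_h/(\mu_\M (t+1)) = O(1/t)$. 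Because the bounding constant is deterministic (inherited from the compact feasible sets), the inequality holds for every realization, and the ``almost surely'' qualifier is automatic.

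The main technical obstacle I anticipate is verifying carefully that the Lipschitz constant of $h_t$ is genuinely $O(1/t)$, i.e., that the averaging inside $\gt$ does not introduce a hidden $t$-dependence. This reduces to showing that $\nabla \gt$ is bounded uniformly in $t$ on $\mathcal{M}$, which follows from the explicit form of $\mathbf{C}_t/t$ and $\mathbf{D}_t/t$ as arithmetic means of matrices whose entries lie in bounded sets (a consequence of $\A_t \in \Ak$, $\dM_t \in \mathcal{D}_t \subset \Bs$, $\Y_t \in \mathcal{Y}$, $\M \in \mathcal{M}$). Once this uniform boundedness is granted, the strong-convexity-vs-Lipschitz trade-off delivers the desired rate.
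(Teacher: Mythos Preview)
Your argument is correct and follows the same Mairal-style template as the paper: combine the second-order growth condition coming from the strong convexity of the surrogate (Assumption~\ref{hyp:gt}) with an $O(1/t)$ Lipschitz-type bound, then divide by $\normf{\M^{(t+1)}-\M^{(t)}}$. The only substantive difference is \emph{which} function carries the Lipschitz constant. The paper uses the growth condition for $\gt$ alone and bounds $\gt(\M^{(t+1)})-\gt(\M^{(t)})$ via the Lipschitz constant $c_t$ of $\gt$ itself; you instead add the two growth inequalities (for $\gt$ at $\M^{(t)}$ and for $\hat g_{t+1}$ at $\M^{(t+1)}$), which makes the increment $h_t=\hat g_{t+1}-\gt$ appear on the left-hand side. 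Your route is the cleaner one here: the prefactor $1/(t+1)$ is explicit in $h_t$, and the remaining Lipschitz constant of $\ell_{t+1}+\beta\Psi-\gt$ is uniformly bounded by exactly the compactness arguments you spell out. In the paper's formulation, by contrast, the constant $c_t=\tfrac{1}{t}\bigl(\normf{\mathbf D_t}+\sqrt{\nband\nendm}\,\normf{\mathbf C_t}\bigr)+\beta c_\Psi$ is not literally $O(1/t)$, since $\mathbf C_t,\mathbf D_t$ are unnormalized sums of $t$ bounded terms and $\beta c_\Psi$ is a fixed constant; the comparison with $\hat g_{t+1}$ that you make explicit is precisely what is needed to extract the correct rate, so the ``technical obstacle'' you anticipated is indeed the heart of the matter and your handling of it is sound.
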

\begin{proof}
According to Assumption \ref{hyp:gt}, $\gt$ is strictly convex with a Hessian lower-bounded by a scalar $\mu_\M > 0$. Consequently, $\gt$ satisfies the second-order growth condition \vspace{-0.2cm}
	\begin{equation}
	\label{eq:lemma1_growth}
	\gt(\M^{(t+1)}) - \gt(\M^{(t)}) \geq \mu_M \normF2{\M^{(t+1)} - \M^{(t)}}. \vspace{-0.2cm}
	\end{equation}
Besides, since $\M \in [0,1]^{\nband \times \nendm}$, we have $\normf{\M} \leq \sqrt{\nband \nendm}$. Hence $\gt$ is Lipschitz continuous with constant $c_t = \frac{1}{t} \left( \normf{\mathbf{D}_t} + \sqrt{\nband \nendm} \normf{\mathbf{C}_t} \right) + \beta c_\Psi$. Indeed, given two matrices $\M_1,\M_2 \in [0,1]^{\nband \times \nendm}$, we have \vspace{-0.2cm}
	\begin{equation}
\begin{split}
&|\gt(\M_1) - \gt(\M_2)| \leq \beta \bigl| \Psi(\M_1) - \Psi(\M_2) \bigr| + \\
& \frac{1}{t} \left| \frac{1}{2} \langle \t{\M_1}\M_1 - \t{\M_2}\M_2, \mathbf{C}_t \rangle - \langle \M_1 - \M_2, \mathbf{D}_t \rangle \right| \\
& \leq \beta c_\Psi \normf{\M_1 - \M_2} + \frac{1}{t}\normf{\M_1 - \M_2}\normf{\mathbf{D}_t} \\
& + \frac{1}{2t} \normf{\t{\M_1}\M_1 - \t{\M_2}\M_2}\normf{\mathbf{C}_t}
\end{split} \vspace{-0.2cm}
	\end{equation}
where $\mathbf{C}_t$ and $\mathbf{D}_t$ were defined in \eqref{eq:CD}.
In addition \vspace{-0.2cm}
	\begin{equation}
\begin{split}
&\normf{\t{\M_1}\M_1 - \t{\M_2}\M_2} = \frac{1}{2} \Vert \t{(\M_1 + \M_2)}(\M_1 - \M_2) \\
& + \t{(\M_1 - \M_2)}(\M_1 + \M_2) \Vert_{\text{F}} \\
%
& \leq 2 \sqrt{\nband \nendm}\normf{\M_1 - \M_2}
\end{split} \vspace{-0.2cm}
	\end{equation}
hence \vspace{-0.3cm}
	\begin{equation}
\label{eq:lemma1_lipschitz}
\left| \gt(\M_1) - \gt(\M_2) \right| \leq c_t \normf{\M_1 - \M_2}. \vspace{-0.2cm}
	\end{equation}
Combining \eqref{eq:lemma1_growth} and \eqref{eq:lemma1_lipschitz}, we have \vspace{-0.2cm}
\begin{equation}
\normf{\M^{(t+1)} - \M^{(t)}} \leq \frac{c_t}{\mu_\M}. \vspace{-0.2cm}
\end{equation}
Since the data, the abundances and the variability are respectively contained in compact sets, $\mathbf{C}_t$ and $\mathbf{D}_t$ are (almost surely) bounded, thus: $c_t = O \left( \frac{1}{t} \right)$ a.s.
\end{proof}

\begin{proposition}[Adapted from \cite{Mairal2010}]
We assume that the requirements in Assumption \ref{hyp:gt} to \ref{hyp:hessian} are satisfied. Let $(\Y_t,\M)$ be an element of $\mathcal{Y}~\times~\mathcal{M}$. Let us define \vspace{-0.2cm}
	\begin{align}
	\mathcal{Z}_t =& \Ak \times \mathcal{D}_t \\ 
	\mathcal{Q}(\Y_t,\M) =& \{(\A,\dM) \in \mathcal{Z}_t | \nonumber \\ &\nabla_{(\A,\dM)} f(\Y_t,\M,\A,\dM) = \mathbf{0} \} \\
	(\A_t^*,\dM_t^*) \in &\mathcal{Q}(\Y_t,\M) \\	
	v(\Y_t,\M)  =& f \bigl( \Y_t,\M,\A_t^*,\dM_t^* \bigr). \vspace{-0.3cm}
	\end{align}
Then  \vspace{-0.1cm}
	\begin{enumerate}
	\item the function $v$ is continuously differentiable \wrt{} $\M$ and $\nabla_\M v(\Y_t,\M) = \nabla_\M f \bigl( \Y_t,\M,\A_t^*,\dM_t^* \bigr)$;
	\item $g$ defined in \eqref{eq:problem} is continously differentiable and $\nabla_\M g(\M) = \mathbb{E}_{\Y_t} \bigl[ \nabla_\M z(\Y_t,\M) \bigr]$;
	\item $\nabla_\M g$ is Lipschitz continuous on $\mathcal{M}$. \vspace{-0.1cm}
	\end{enumerate}
\end{proposition}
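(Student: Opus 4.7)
The plan is to use an envelope-theorem argument whose backbone is the implicit function theorem applied to the first-order condition $\nabla_{(\A,\dM)} f(\Y_t,\M,\A,\dM) = \mathbf{0}$ that defines $\mathcal{Q}(\Y_t,\M)$. I would fix $(\A_t^*,\dM_t^*) \in \mathcal{Q}(\Y_t,\M)$ and observe that, since $f$ is $C^2$ and its $(\A,\dM)$-Hessian $\Hf$ is invertible at that critical point (Assumption \ref{hyp:hessian}), the implicit function theorem yields a neighborhood of $\M$ on which a $C^1$ selection $\M \mapsto (\A_t^*(\M),\dM_t^*(\M))$ is defined as the locally unique zero of $\nabla_{(\A,\dM)} f$. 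Differentiating $v(\Y_t,\M) = f(\Y_t,\M,\A_t^*(\M),\dM_t^*(\M))$ by the chain rule,
\[
\nabla_\M v(\Y_t,\M) = \nabla_\M f(\Y_t,\M,\A_t^*,\dM_t^*) + \nabla_{(\A,\dM)} f(\Y_t,\M,\A_t^*,\dM_t^*)\, J,
\]
where $J$ is the Jacobian of the implicit selection; the second term vanishes by the very definition of $\mathcal{Q}(\Y_t,\M)$, establishing item (1), and continuity of $\nabla_\M v$ follows from continuity of $\nabla_\M f$ and the $C^1$ dependence of $(\A_t^*,\dM_t^*)$ on $\M$.

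For items (2) and (3), compactness is the key lever. The Remark in Section \ref{p:PLMM} and the definitions of $\Ak$ and $\mathcal{D}_t$ confine $(\Y,\M,\A,\dM)$ to a compact set, so $\nabla_\M f$, being continuous, is uniformly bounded and Lipschitz on this domain. I would exchange differentiation and expectation in $g(\M) = \mathbb{E}_{\Y}\bigl[v(\Y,\M)\bigr]$ via dominated convergence, yielding $\nabla_\M g(\M) = \mathbb{E}_{\Y}\bigl[\nabla_\M v(\Y,\M)\bigr]$ together with its continuity. The Lipschitz property of $\nabla_\M g$ then reduces to showing that $\M \mapsto \nabla_\M f(\Y,\M,\A_t^*(\M),\dM_t^*(\M))$ is Lipschitz with a constant uniform in $\Y \in \mathcal{Y}$: this follows by composing the $C^1$ (hence locally Lipschitz) implicit selection with the Lipschitz gradient $\nabla_\M f$, and passing the bound through the expectation.

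The hard part will be legitimizing the implicit-function step at a constrained critical point: $\mathcal{Q}(\Y_t,\M) \subseteq \mathcal{Z}_t = \Ak \times \mathcal{D}_t$, yet the condition $\nabla_{(\A,\dM)} f = \mathbf{0}$ is that of an unconstrained critical point, so one must interpret the elements of $\mathcal{Q}(\Y_t,\M)$ as interior to $\mathcal{Z}_t$ (or augment the analysis with KKT multipliers, which complicates the Hessian). A related subtlety is the possible non-uniqueness of critical points in $\mathcal{Q}(\Y_t,\M)$; Assumption \ref{hyp:hessian}, reinforced by the regularization $\tfrac{\mu}{2}\normF2{\A}$ discussed in Appendix \ref{sec:hessian}, ensures local uniqueness so that a consistent implicit branch can be selected globally. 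Once these points are settled, the remainder is essentially a transcription of the corresponding step in \cite{Mairal2010}.
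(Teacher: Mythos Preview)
Your proposal is correct and follows essentially the same route as the paper: the implicit function theorem applied to $\nabla_{(\A,\dM)} f = \mathbf{0}$ under Assumption~\ref{hyp:hessian} yields a $C^1$ selection, the chain rule plus the vanishing first-order term gives item~(1), and compactness of $\mathcal{Y}$, $\mathcal{M}$, $\Ak$, $\mathcal{D}_t$ handles items~(2) and~(3). You even flag the interior-point and branch-selection subtleties that the paper's own proof leaves implicit.
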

\begin{proof}
The existence of local minima of $f(\Y_t,\M,\cdot,\cdot)$ on $\mathcal{Z}_t$ follows from the continuity of $f(\Y_t,\M,\cdot,\cdot)$ and the compactness of $\mathcal{Z}_t$. This ensures the non-emptiness of $\mathcal{Q}(\Y_t,\M)$ and justifies the definition of $(\A_t^*,\dM_t^*)$.

Furthermore, Assumption \ref{hyp:hessian} requires the invertibility of the Hessian matrix $\Hf$ at the point $(\Y_t,\M,(\A_t^*,\dM_t^*))$. The first statement then follows from the implicit function theorem \cite[Theorem 5.9 p.19]{Lang1999}: there exist two open subsets $V \subset \mathcal{M}$,  $W \subset \mathcal{Z}_t$ and a continuously differentiable function $\varphi : V \longrightarrow W $ such that
\begin{enumerate}[label=(\roman*)]
\item $(\M,(\A_t^*,\dM_t^*)) \in V \times W \subset \mathcal{M} \times \mathcal{Z}_t$;
\item for all $(\Mtilde,(\A,\dM)) \in V \times W$, we have
\end{enumerate}
\begin{equation} \label{eq:phi}
\begin{split}
[\nabla_{(\A,\dM)} f(\Y_t,\Mtilde,&\A,\dM) = \mathbf{0}] \\ 
&\Rightarrow [(\A,\dM) = \varphi(\Mtilde)];
\end{split} \vspace{-0.3cm}
\end{equation}
\begin{enumerate}[label=(\roman*),resume]
\item for all $\Mtilde \in V$,
\end{enumerate}
\begin{equation}
\begin{split}
\frac{\partial \varphi}{\partial \M} (\Mtilde) = &- \mathbf{H}_{(\A,\dM)}^{-1}f (\Y_t,\Mtilde,\varphi(\Mtilde)) \\ &\frac{\partial f}{\partial \M \partial (\A,\dM)} (\Y_t,\Mtilde, \varphi(\Mtilde)).
\end{split} \vspace{-0.3cm}
\end{equation}

In particular, $(\M,(\A_t^*,\dM_t^*)) \in V \times W$ satisfies \eqref{eq:phi}. Then, taking the derivative of $v( \Y_t,\cdot )$ in $\M$ leads to \vspace{-0.2cm}
\begin{equation}
\begin{split}
\frac{\partial v}{\partial \M} (\Y_t,\M) = &\underbrace{\frac{\partial f}{\partial (\A,\dM)} (\Y_t,\M,\varphi(\M))}_{ = \mathbf{0} \text{ since } \varphi(\M) \in \mathcal{Q}(\Y_t,\M)} \frac{\partial \varphi}{\partial \M} (\M) \\
&+ \frac{\partial f}{\partial \M}(\Y_t,\M,\varphi(\M))
\end{split} \vspace{-0.2cm}
\end{equation}
%
%

The second statement follows from the continuous differentiability of $z(\Y_t,\cdot)$, defined on a compact set.

We finally observe that $\normf{\A_t^*}$ and $\normf{\dM_t^*}$ are respectively bounded by a constant independent from $\Y_t$ (since $(\A_t^*,\dM_t^*) \in \Ak \times \mathcal{D}_t$). This observation, combined with the expression of $\nabla_\M f$ and the compactness of $\mathcal{M}$, leads to the third claim.
\end{proof}
\end{appendices}


\bibliographystyle{IEEEtran}
\bibliography{strings_all_ref2,all_ref2}

\begin{thebibliography}{10}
\providecommand{\url}[1]{#1}
\csname url@samestyle\endcsname
\providecommand{\newblock}{\relax}
\providecommand{\bibinfo}[2]{#2}
\providecommand{\BIBentrySTDinterwordspacing}{\spaceskip=0pt\relax}
\providecommand{\BIBentryALTinterwordstretchfactor}{4}
\providecommand{\BIBentryALTinterwordspacing}{\spaceskip=\fontdimen2\font plus
\BIBentryALTinterwordstretchfactor\fontdimen3\font minus
  \fontdimen4\font\relax}
\providecommand{\BIBforeignlanguage}[2]{{%
\expandafter\ifx\csname l@#1\endcsname\relax
\typeout{** WARNING: IEEEtran.bst: No hyphenation pattern has been}%
\typeout{** loaded for the language `#1'. Using the pattern for}%
\typeout{** the default language instead.}%
\else
\language=\csname l@#1\endcsname
\fi
#2}}
\providecommand{\BIBdecl}{\relax}
\BIBdecl

\bibitem{Bioucas2012jstars}
J.~M. Bioucas-Dias, A.~Plaza, N.~Dobigeon, M.~Parente, Q.~Du, P.~Gader, and
  J.~Chanussot, ``Hyperspectral unmixing overview: Geometrical, statistical,
  and sparse regression-based approaches,'' \emph{IEEE J. Sel. Topics Appl.
  Earth Observ. in Remote Sens.}, vol.~5, no.~2, pp. 354--379, April 2012.

\bibitem{Zare2014IEEESPMAG}
A.~Zare and K.~C. Ho, ``Endmember variability in hyperspectral imagery,''
  \emph{IEEE Signal Process. Mag.}, vol.~31, no.~1, pp. 95--104, Jan. 2014.

\bibitem{Eches2010ip}
O.~Eches, N.~Dobigeon, C.~Mailhes, and J.-Y. Tourneret, ``{B}ayesian estimation
  of linear mixtures using the normal compositional model. {A}pplication to
  hyperspectral imagery,'' \emph{IEEE Trans. Image Process.}, vol.~19, no.~6,
  pp. 1403--1413, June 2010.

\bibitem{Du2014}
X.~Du, A.~Zare, P.~Gader, and D.~Dranishnikov, ``Spatial and spectral unmixing
  using the beta compositional model,'' \emph{IEEE J. Sel. Topics Appl. Earth
  Observ. in Remote Sens.}, vol.~7, no.~6, pp. 1994--2003, June 2014.

\bibitem{Halimi2014}
A.~Halimi, N.~Dobigeon, and J.-Y. Tourneret, ``Unsupervised unmixing of
  hyperspectral images accounting for endmember variability,'' \emph{IEEE
  Trans. Image Process.}, vol.~24, no.~12, pp. 4904--4917, Dec. 2015.

\bibitem{Somers2012jstars2}
B.~Somers, M.~Zortea, A.~Plaza, and G.~Asner, ``Automated extraction of
  image-based endmember bundles for improved spectral unmixing,'' \emph{IEEE J.
  Sel. Topics Appl. Earth Observ. in Remote Sens.}, vol.~5, no.~2, pp.
  396--408, April 2012.

\bibitem{Thouvenin2015}
P.-A. Thouvenin, N.~Dobigeon, and J.-Y. Tourneret, ``Hyperspectral unmixing
  with spectral variability using a perturbed linear mixing model,'' \emph{IEEE
  Trans. Signal Process.}, vol.~64, no.~2, pp. 525--538, Jan. 2016.

\bibitem{Halimi2015eusipco}
A.~Halimi, N.~Dobigeon, J.-Y. Tourneret, S.~McLaughlin, and P.~Honeine,
  ``Unmixing hyperspectral images accounting for temporal and spatial endmember
  variability,'' in \emph{Proc. European Signal Process. Conf. (EUSIPCO)},
  Nice, France, 2015, pp. 1686--1690.

\bibitem{Henrot2015}
\BIBentryALTinterwordspacing
S.~Henrot, J.~Chanussot, and C.~Jutten, ``Dynamical spectral unmixing of
  multitemporal hyperspectral images,'' Oct. 2015, arXiv preprint. [Online].
  Available: \url{http://arxiv.org/pdf/1510.04238v1.pdf}
\BIBentrySTDinterwordspacing

\bibitem{Ralph2011}
D.~Ralph and H.~Xu, ``Convergence of stationary points of sample average
  two-stage stochastic programs: A generalized equation approach,''
  \emph{Mathematics of Operations Research}, vol.~36, no.~3, pp. 568--592, Aug.
  2011.

\bibitem{Mairal2010}
J.~Mairal, F.~Bach, J.~Ponce, and G.~Sapiro, ``Online learning for matrix
  factorization and sparse coding,'' \emph{J. Mach. Learning Research},
  vol.~11, pp. 19--60, Jan. 2010.

\bibitem{Nascimento2005}
J.~M. Nascimento and J.~M. {Bioucas-Dias}, ``Vertex component analysis: a fast
  algorithm to unmix hyperspectral data,'' \emph{IEEE Trans. Geosci. Remote
  Sens.}, vol.~43, no.~4, pp. 898--910, April 2005.

\bibitem{Heinz2001}
D.~C. Heinz and {C. -I Chang}, ``Fully constrained least-squares linear
  spectral mixture analysis method for material quantification in hyperspectral
  imagery,'' \emph{IEEE Trans. Geosci. Remote Sens.}, vol.~29, no.~3, pp.
  529--545, March 2001.

\bibitem{Bioucas2010}
J.~M. Bioucas-Dias and M.~A.~T. Figueiredo, ``Alternating direction algorithms
  for constrained sparse regression: Application to hyperspectral unmixing,''
  in \emph{Proc. IEEE GRSS Workshop Hyperspectral Image Signal Process.:
  Evolution in Remote Sens. (WHISPERS)}, Reykjavik, Iceland, June 2010.

\bibitem{Bioucas2009}
J.~M. Bioucas-Dias, ``A variable splitting augmented {L}agrangian approach to
  linear spectral unmixing,'' in \emph{Proc. IEEE GRSS Workshop Hyperspectral
  Image Signal Process.: Evolution in Remote Sens. (WHISPERS)}, Grenoble,
  France, Aug. 2009.

\bibitem{Qian2011}
Y.~Qian, S.~Jia, J.~Zhou, and A.~Robles-Kelly, ``Hyperspectral unmixing via
  $l_{1/2}$ sparsity-constrained nonnegative matrix factorization,'' \emph{IEEE
  Trans. Geosci. Remote Sens.}, vol.~49, no.~11, pp. 4282--4297, Nov. 2011.

\bibitem{Berman2004}
M.~Berman, H.~Kiiveri, R.~Lagerstrom, A.~Ernst, R.~Dunne, and J.~F. Huntington,
  ``{ICE}: A statistical approach to identifying endmembers in hyperspectral
  images,'' \emph{IEEE Trans. Geosci. Remote Sens.}, vol.~42, no.~10, pp.
  2085--2095, Oct. 2004.

\bibitem{Arngren2011}
M.~Arngren, M.~N. Schmidt, and J.~Larsen, ``Unmixing of hyperspectral images
  using {B}ayesian nonnegative matrix factorization with volume prior,''
  \emph{J. Signal Process. Sys.}, vol.~65, no.~3, pp. 479--496, Nov. 2011.

\bibitem{Bolte2013}
J.~Bolte, S.~Sabach, and M.~Teboulle, ``Proximal alternating linearized
  minimization for nonconvex and nonsmooth problems,'' \emph{Mathematical
  Programming}, vol. 1-2, no. 146, pp. 459--494, July 2013.

\bibitem{Duchi2008}
J.~Duchi, S.~Shalev-Schwartz, Y.~Singer, and T.~Chandra, ``Efficient projection
  onto the $\ell_1$-ball for learning in high dimensions,'' in \emph{Proc. Int.
  Conf. Machine Learning ({ICML})}, Helsinki, Finland, 2008.

\bibitem{Condat2015}
L.~Condat, ``Fast projection onto the simplex and the $\ell_1$ ball,''
  \emph{Math. Program., Ser. A}, pp. 1--11, Sept. 2015.

\bibitem{Boyle1986}
J.~P. Boyle and R.~L. Dykstra, ``A method for finding projections onto the
  intersection of convex sets in hilbert spaces,'' in \emph{Advances in Order
  Restricted Statistical Inference}.\hskip 1em plus 0.5em minus 0.4em\relax
  Springer New York, 1986, pp. 28--47.

\bibitem{Becker2011}
S.~Becker, E.~J. Cand\`es, and M.~Grant, ``Templates for convex cone problems
  with applications to sparse signal recovery,'' \emph{Math. Prog. Comp.},
  vol.~3, no.~3, pp. 165--218, Sept. 2011.

\bibitem{Heylen2013}
R.~Heylen, A.~Atker, and P.~Scheunders, ``On using projection onto convex sets
  for solving the hyperspectral unmixing problem,'' \emph{IEEE Geosci. Remote
  Sens. Lett.}, vol.~10, no.~6, pp. 1522--1526, 2013.

\bibitem{Thouvenin2015TR}
\BIBentryALTinterwordspacing
P.-A. Thouvenin, N.~Dobigeon, and J.-Y. Tourneret, ``Online unmixing of
  multitemporal hyperspectral images accounting for spectral variability --
  complementary results,'' University of Toulouse, IRIT/INP-ENSEEIHT, France,
  Tech. Rep., Oct. 2015. [Online]. Available:
  \url{http://thouvenin.perso.enseeiht.fr/papers/Thouvenin_TR_2015.pdf}
\BIBentrySTDinterwordspacing

\bibitem{Bioucas2008}
J.~M. {Bioucas-Dias} and J.~M.~P. Nascimento, ``Hyperspectral subspace
  identification,'' \emph{IEEE Trans. Geosci. Remote Sens.}, vol.~46, no.~8,
  pp. 2435--2445, Aug. 2008.

\bibitem{Halimi2016}
\BIBentryALTinterwordspacing
A.~Halimi, P.~Honeine, M.~Kharouf, C.~Richard, and J.-Y. Tourneret,
  ``Estimating the intrinsic dimension of hyperspectral images using an
  eigen-gap approach,'' \emph{IEEE Trans. Geosci. Remote Sens.}, 2016, to
  appear. [Online]. Available: \url{http://arxiv.org/pdf/1501.05552v1.pdf}
\BIBentrySTDinterwordspacing

\bibitem{Parikh2014}
N.~Parikh and S.~Boyd, ``Proximal algorithms,'' \emph{Foundations and
  Trends$^{\mbox{\scriptsize{\textregistered}}}$ in Optimization}, vol.~1,
  no.~3, pp. 127--239, 2014.

\bibitem{Lang1999}
S.~Lang, \emph{Fundamentals of differential geometry}, 1st~ed., ser. Graduate
  Texts in Mathematics.\hskip 1em plus 0.5em minus 0.4em\relax Springer-Verlag
  New York, 1999, vol. 191.

\end{thebibliography}
\vspace{-1.2cm}


\begin{IEEEbiographynophoto}{Pierre-Antoine Thouvenin} (S'15) received the state engineering degree in electrical engineering from ENSEEIHT, Toulouse, France, and the M.Sc. degree in signal processing from the National Polytechnic Institute of Toulouse (INP Toulouse), both in 2014. He is currently working toward the Ph.D. degree within the Signal and Communications Group of the IRIT Laboratory, Toulouse, France. His research is currently focused on hyperspectral unmixing and variability modeling in hyperspectral imagery.
\end{IEEEbiographynophoto}

\vspace{-1.1cm}

\begin{IEEEbiographynophoto}{Nicolas Dobigeon} (S'05--SM'08--SM'13) received the state engineering degree in electrical engineering from ENSEEIHT, Toulouse, France, and the M.Sc. degree in signal
processing from the National Polytechnic Institute of Toulouse (INP Toulouse), both in June 2004, as well as the Ph.D. degree and Habilitation \`a Diriger des Recherches in Signal Processing from the INP Toulouse in 2007 and 2012, respectively.
He was a Post-Doctoral Research Associate with the Department of Electrical Engineering and Computer Science, University of Michigan, Ann Arbor, MI, USA, from 2007 to 2008. 

Since 2008, he has been with the National Polytechnic Institute of Toulouse (INP-ENSEEIHT, University of Toulouse) where he is currently an Associate Professor. He conducts his research within the Signal and Communications Group of the IRIT Laboratory and he is also an affiliated faculty member of the Telecommunications for Space and Aeronautics (TeSA) cooperative laboratory.
His current research interests include statistical signal and image processing, with a particular interest in Bayesian inverse problems with applications to remote sensing, biomedical imaging and genomics.
\end{IEEEbiographynophoto}

\vspace{-1cm}

\begin{IEEEbiographynophoto}{Jean-Yves Tourneret} (SM'08) received the ing\'enieur degree in electrical engineering from the Ecole Nationale Sup\'erieure d'Electronique, d'Electrotechnique, d'Informatique, d'Hydraulique et des T\'el\'ecommunications (ENSEEIHT) de Toulouse in 1989 and the Ph.D. degree from the National Polytechnic Institute from Toulouse in 1992. He is currently a professor in the university of Toulouse (ENSEEIHT) and a member of the IRIT laboratory (UMR 5505 of the CNRS). His research activities are centered around statistical signal and image processing with a particular interest to Bayesian and Markov chain Monte Carlo (MCMC) methods. He has been involved in the organization of several conferences including the European conference on signal processing EUSIPCO'02 (program chair), the international conference ICASSP'06 (plenaries), the statistical signal processing workshop SSP'12 (international liaisons), the International Workshop on Computational Advances in Multi-Sensor Adaptive Processing CAMSAP 2013 (local arrangements), the statistical signal processing workshop SSP'2014 (special sessions), the workshop on machine learning for signal processing MLSP'2014 (special sessions). He has been the general chair of the CIMI workshop on optimization and statistics in image processing hold in Toulouse in 2013 (with F. Malgouyres and D. Kouam\'e) and of the International Workshop on Computational Advances in Multi-Sensor Adaptive Processing CAMSAP 2015 (with P. Djuric). He has been a member of different technical committees including the Signal Processing Theory and Methods (SPTM) committee of the IEEE Signal Processing Society (2001-2007, 2010-present). He has been serving as an associate editor for the IEEE Transactions on Signal Processing (2008-2011, 2015-present) and for the EURASIP journal on Signal Processing (2013-present).
\end{IEEEbiographynophoto}

\end{document}